\newtheorem{lemma}{Lemma}
\newtheorem{theorem}{Theorem}
\newcommand{\Poly}{\ensuremath{\mathcal{P}}}               
\newcommand{\Vis}{\ensuremath{\mathrm{Vis}_\Poly(q)}} 
\newcommand{\Hout}{\ensuremath{\bar{h}}}      
\newcommand{\DS}{\ensuremath{{{\mathcal{DS}}}}}     
\newcommand{\Fir}{\ensuremath{{\overline{\mathcal{F}}}}}     
\newcommand{\Nex}{\ensuremath{{\overline{\mathcal{S}}}}}     
\newcommand{\BFir}{\ensuremath{{\mathcal{F}}}}     
\newcommand{\BNex}{\ensuremath{{\mathcal{S}}}}     
\newcommand{\unb}[1]{\ensuremath{\hat{a}_{#1}}}     
\newcommand{\lnb}[1]{\ensuremath{\check{a}_{#1}}}   
\newcommand{\alg}{\ensuremath{\mathcal{A}}\xspace} 
\newcommand{\lp}{\textsc{StackNeighbors}} 
\newcommand{\marrow}{\marginpar[\hfill$\longrightarrow$]{$\longleftarrow$}}
\newcommand{\blueremark}[2]{\textcolor{blue}{\textsc{#1 says:} \marrow\textsf{#2}}}
\newcommand{\luis}[1]{\blueremark{Luis}{#1}}
\newcommand{\ShoLong}[2]{#2} 
\title{Space-Time Trade-offs for Stack-Based Algorithms\footnote{A preliminary version of this paper appeared in the proceedings of the 30th Symposium on Theoretical Aspects of Computer Science (STACS 2013)~\cite{bklss-sttosba-12}.
}}
\author{
Luis Barba\thanks{Carleton University, Ottawa, Canada} 
$^{,}$\thanks{Universit\'e Libre de Bruxelles (ULB), Brussels, Belgium. {\tt \{lbarbafl,stefan.langerman\}@ulb.ac.be}.}
\and Matias Korman \thanks{%
National Institute of Informatics, Tokyo, Japan. Email: {\tt \{korman,sada\}@nii.ac.jp}} 
$^{,}$\thanks{
JST, ERATO, Kawarabayashi Large Graph Project.
}
\and Stefan Langerman\footnotemark[3]
$^{,}$\thanks{Directeur de Recherches du FRS-FNRS.}
\and Kunihiko Sadakane\footnotemark[4]
$^{,}$\thanks{Supported in part by JSPS KAKENHI 23240002.}
\and Rodrigo I. Silveira\thanks{Dept. de  Matem\'atica, Universidade de Aveiro, Portugal. {\tt{rodrigo.silveira@ua.pt}.}} $^{ }$  $^{,}$\thanks{Universitat Polit\`{e}cnica de Catalunya (UPC), Barcelona, Spain. {\tt  rodrigo.silveira@upc.edu}. }
}
\begin{document}
\maketitle

\begin{abstract}

In memory-constrained algorithms, access to the input is restricted to be read-only, and the number of extra variables that the algorithm can use is bounded.
In this paper we introduce the \emph{compressed stack technique}, a method that allows to transform algorithms whose main memory consumption takes the form of a stack into memory-constrained algorithms. 

Given an algorithm \alg\ that runs in $O(n)$ time using a stack of length $\Theta(n)$, we can modify it so that it runs in  $O(n^2\log n/2^s)$ time  using a workspace of $O(s)$ variables (for any $s\in o(\log n)$) or $O(n^{1+1/\log p})$ time using $O(p\log_p n)$ variables (for any $2\leq p\leq n$). We also show how the technique can be applied to solve various geometric problems, namely computing the convex hull of a simple polygon, a triangulation of a monotone polygon, the shortest path between two points inside a monotone polygon, a 1-dimensional pyramid approximation of a 1-dimensional vector, and the visibility profile of a point inside a simple polygon. 

Our approach improves or matches up to a $O(\log n)$ factor the running time of the best-known results for these problems in constant-workspace models (when they exist), and gives a trade-off between the size of the workspace and running time. To the best of our knowledge, this is the first general framework for obtaining memory-constrained algorithms.
\end{abstract}
\ShoLong{}{}

\section{Introduction}

The amount of resources available to computers is continuing to grow exponentially year after year.
Many algorithms are nowadays developed with little or no regard to the amount of memory used. However, with the appearance of specialized devices, there has been a renewed interest in algorithms that use as little memory as possible. 

Moreover, even if we can afford large amounts of memory, it might be preferable to limit the number of writing operations. For instance, writing into flash memory is a relatively slow and costly operation, which also reduces the lifetime of the memory. Write-access to removable memory devices might also be limited for technical or security reasons. Whenever several concurrent algorithms are working on the same data, write operations also become problematic due to concurrency problems. A possible way to deal with these situations is considering algorithms that do not modify the input, and use as few variables as possible. 

Several different  memory-constrained models exist in the literature. In most of them the input is considered to be in some kind of read-only data structure. In addition to the input, the algorithm is allowed to use a small amount of variables to solve the problem. In this paper, we look for space-time trade-off algorithms; that is, we devise algorithms that are allowed to use up to $s$ additional variables (for any parameter $s\leq n$). Naturally, our aim is that the running time of the algorithm decreases as $s$ grows. 

Many problems have been considered under this framework. In virtually all of the results, either an unconstrained algorithm is transformed to memory-constrained environments, or a new algorithm is created. Regardless of the type, the algorithm is usually an ad-hoc method tailored for the particular problem. In this paper, we take a different approach: we present a simple yet general approach to construct memory-constrained algorithms. Specifically, we present a method that transforms a class of algorithms whose space bottleneck is a stack into memory-constrained algorithms. In addition to being simple, our approach has the advantage of being able to work in a  {\em black-box} fashion: provided that some simple requirements are met, our technique can be applied to any stack-based algorithm without knowing specific details of their inner workings. 

\textbf{Stack Algorithms.}
One of the main algorithmic techniques in computational geometry is the incremental approach. At each step, a new element of the input is considered and some internal structure is updated in order to maintain a partial solution to the problem, which in the end will result in the final output. We focus on {\em stack algorithms}, that is, incremental algorithms where the internal structure is a stack (and possibly $O(1)$ extra variables). A more precise definition is given in Section~\ref{sec_prelim}.

We show how to transform any such algorithm into an algorithm that works in memory-constrained environments. The main idea behind our approach is to avoid storing the stack explicitly, reconstructing it whenever needed. The running time of our approach depends on the size of the workspace. Specifically, it runs in $O(n^{1+1/\log p})$ time and uses $O(p\log_p n)$ variables (for any $2\leq p\leq n$)~\footnote{In a preliminary version of this paper~\cite{bklss-sttosba-12}, we claimed slightly faster running times. Unfortunately, the claimed bounds turned out to be incorrect.}. In particular, when $p$ is a large constant (e.g. for $p=2^{1/\varepsilon}$) the algorithm runs in $O(n^{1+\varepsilon})$ time and uses $O(\log n)$ space. Also, if $p=n^\varepsilon$ the technique gives a linear-time algorithm that uses only $O(n^\varepsilon)$ variables (for any $\varepsilon>0$)

Our approach also works if only  $o(\log n)$ space is available, at the expense of restricting slightly the class of algorithms considered.
We say that a stack algorithm is {\em green\footnote{or environmentally friendly.}} if, without using the stack, it is possible to reconstruct certain parts of the stack efficiently (this will be formalized in Section~\ref{sec_green}).  We show how to transform any green stack algorithm into one that runs in $O(\frac{n^2 \log n}{2^s})$ time using $O(s)$ variables for any $s\in o(\log n)$. Afterwards, in Section~\ref{sec_hyb} we use the properties of green algorithms to obtain a speed-up for $\Omega(\log n)$ workspaces. A summary of the running time of our algorithms as a function of the available space (and the type of the algorithm) can be seen in Table~\ref{table_times}.

\begin{table}[ht] 
\centering 
\begin{tabular}{c c|| c |c ||c } 
 && \multicolumn{2}{c||}{ Running Time} & \\ \cline{3-4} 
\multicolumn{2}{c||}{Space} & Stack Alg. & Green Stack Alg. & Notes \\[0.9ex] \hline  
$O(s)$ &$\forall s\in o(\log n)$                                  & & $O(\frac{n^2\log n}{2^s})$ & Th.~\ref{theo_timelogn}\\
$O(\log n)$ &  $p=O(1)$                                                             & $O(n^{1+\varepsilon})$  & $O(n\log ^{2}n)$ & Th.~\ref{theo_gen}/Th.~\ref{theo_hybrid}\\
$O(n^{\frac{1}{(1+a)\log\log n}}\log\log n)$ & $0<a<1$& &$O(n\log^{1+a}n)$ &Th.~\ref{theo_hybrid} \\
$O(p\log_p n)$ & $p\in w(1) \cap o(n^{\varepsilon})$ & $O(n^{1+\frac{1}{\log p}})$& & Th.~\ref{theo_gen}\\
$O(n^{\varepsilon})$ & $p=n^{\varepsilon}$, $\forall\varepsilon\in (0,1)$ & $O(n)$ & $O(n)$ & Th.~\ref{theo_gen}\\
\hline 
\end{tabular} 
\caption{Complexity bounds of the several algorithms introduced in this paper} 
\label{table_times} 
\end{table} 

Our techniques are conceptually very simple, and can be used with any (green) stack algorithm in an essentially black-box fashion. We only need to replace the stack data structure with the compressed data structure explained, and create one or two additional operations for reconstructing elements in the stack. To the best of our knowledge, this is the first general framework for obtaining memory-constrained algorithms.

\textbf{Applications.}
The technique is applicable, among others, to the following well-known and fundamental geometric problems (illustrated in Fig.~\ref{fig:applications}):
(i) computing the convex hull of a simple polygon,
(ii) triangulating a monotone polygon,
(iii) computing a shortest path inside a simple polygon,
(iv) computing a 1-dimensional pyramid approximation of an histogram,
(v) computing the visibility polygon of a point in a simple polygon.
More details about these problems are presented in Sections~\ref{sec_prelim} and~\ref{sec_applis}.

\begin{figure}[tb]
\centering
\includegraphics[width=1\textwidth]{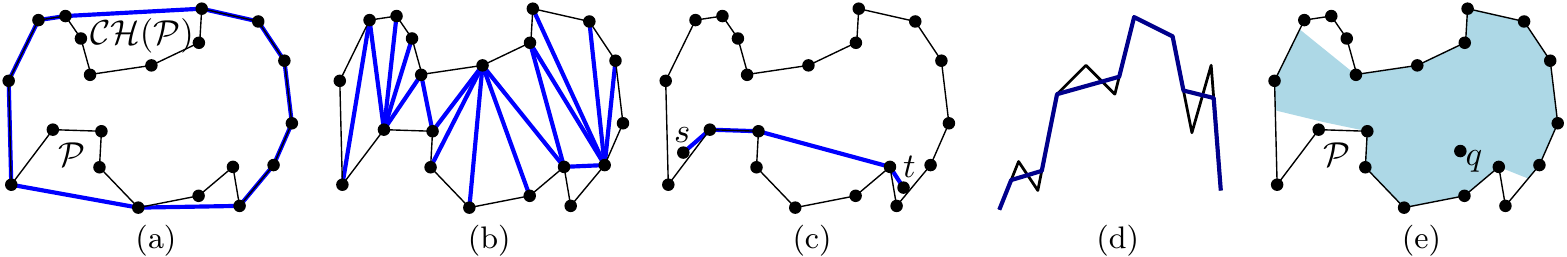}
\caption {Applications of the compressed stack, from left to right: convex hull of a simple polygon, triangulation of a monotone polygon, shortest path computation between two points inside a monotone polygon, optimal 1-d pyramid approximation, and visibility polygon of a point $q\in\Poly$.}
\label{fig:applications}
\end{figure}

We show in Section~\ref{sec_applis} that there exist green algorithms for all of the above applications except for the shortest path computation and pyramid approximation problems. Hence our technique results in new algorithms with time/space complexities as shown in Table~\ref{table_times}. In particular, when $p=n^{1/\varepsilon}$, they run in linear-time using $O(n^\varepsilon)$ variables, for any constant $\varepsilon>0$. 
%
%
For many of the applications discussed, our general trade-off matches or improves the best known algorithms throughout its space range.
For all the other applications, the running time of our technique differs from those of the fastest existing algorithms by at most a $O(\log n)$ factor.
%
A deeper comparison between the existing algorithms and our approach is given in Section~\ref{sec_similar}.

\textbf{Paper Organization} In Section~\ref{sec_prelim} we define our computational model, and we compare our results with those existing in the literature. Afterwards, in Section~\ref{sec_stack} we explain the compressed stack data structure. Essentially, our technique consists in replacing the $O(n)$-space stack of \alg\ by a \emph{compressed stack} that uses less space. This stack needs at least $\Omega(\log n)$ variables due to a hierarchical partition of the input.

For the case in which $o(\log n)$ space is available, we must add one requirement to the problem for our technique to work. We require the existence of a \lp\ operation that, given an input value $a\in \mathcal{I}$ and a consecutive interval $\mathcal{I}' \subseteq \mathcal{I}$ of the input, allows us to determine the elements of the input that were pushed into the stack immediately before and after $a$. Naturally, this operation must be efficient, and must not use too much space. Whenever such operation exists, we say that \alg\ is {\em green}. In Section~\ref{sec_green} we give a formal definition of green stack algorithms, and faster algorithms for them. In Section~\ref{seclargek} we explain how to handle the case in which \alg\ accesses the top $k$ elements of the stack (for any constant $k>1$). Finally, in Section \ref{sec_applis} we mention several applications of our technique, including examples of green algorithms.



\section{Preliminaries and related work}\label{sec_prelim}

\subsection{Memory-restricted models of computation}

Given its importance, a significant amount of research has focused on memory-constrained algorithms, some of them dating back to the 1980s~\cite{mp-ssls-80}. 
\ShoLong{}{One of the most studied problems in this setting is that of selection~\cite{mr-sromswmdm-96,Frederickson87,rr-iubtstsls-98,Chan}, where several time-space trade-off algorithms (and lower bounds) for different computation models exist.

}
In this paper, we use a generalization of the constant-workspace model, introduced by Asano {\em et al.}~\cite{amrw-cwagp-10,amw-cwspt-11}. In this model, the input of the problem is in a read-only data structure. In addition to the input, an algorithm can only use a constant number of additional variables to compute the solution. Implicit storage consumption by recursive calls is also considered part of the workspace. In complexity theory, the constant-workspace model has been studied under the name of {\em log space} algorithms~\cite{AB09}. In this paper, we are interested in allowing more than a constant number of workspace variables. Therefore, we say that an algorithm is an $s$-workspace algorithm if it uses a total of $O(s)$ variables during its execution. 
We aim for algorithms whose running times decrease as $s$ grows, effectively obtaining a space-time trade-off. Since the size of the output can be larger than our allowed space $s$, the solution is not stored but reported in a write-only memory. 
 
In the usual constant-workspace model, one is allowed to perform random access to any of the values of the input in constant time. The technique presented in this paper does not make use of such random access. Note that our technique will only use it when algorithm being adapted specifically needs it or in the case of green algorithms. Otherwise, our technique works in a more constrained model in which, given a pointer to a specific input value, we can either access it, copy it, or move the pointer to the previous or next input value.

Many other similar models in which the usage of memory is restricted exist in the literature. We note that in some of them (like the {\em streaming}~\cite{gk-seocqs-01} or the {\em multi-pass} model~\cite{cc-mpga-07}) the values of the input can only be read once or a fixed number of times. 
We follow the model of constant-workspace and allow scanning the input as many times as necessary. 
Another related topic is the study of \emph{succinct} data structures~\cite{JacoSucc89}. The aim of these structures is to use the minimum number of bits to represent a given input. Although this kind of approach drastically reduces the memory needed, in many cases $\Omega(n)$ bits are still necessary, and the input data structure needs to be modified.

Similarly, in the {\em in-place algorithm} model~\cite{bc-seacchspllt-06}, one is allowed a constant number of additional variables, but it is possible to rearrange (and sometimes even modify) the input values. This model is much more powerful, and allows to solve most of the above problems in the same time as unconstrained models. Thus, our model is particularly interesting when the input data cannot be modified, write operations are much more expensive than read operation, or whenever several programs need to access the same data concurrently. 

\subsection{Previous work on memory-constrained settings for our applications}
In this section we review previous work on memory-constrained algorithms for the different problems that can be solved with our technique.

\begin{description}

\item[Convex hull of a simple polygon] Given its importance, this problem has been intensively studied in memory-constrained environments. The well-known \emph{gift wrapping} or \emph{Jarvis march} algorithm~\cite{j-ich-73} fits our computation model, and reports the convex hull of a set of points (or a simple polygon) in $O(n\Hout)$ time using $O(1)$ variables, where \Hout\ is the number of vertices on the convex hull. 

Chan and Chen~\cite{cc-mpga-07} showed how to modify Graham's scan so as to compute the upper hull of set of $n$ points sorted in the $x$-coordinates.
Their algorithm runs in expected $O(n\log_p n)$ time using $O(p\log_p n)$ variables (see Theorem 4.5 of~\cite{cc-mpga-07}, and replace $n^\delta$ by $p$). 
Notice that points being sorted in the $x$-coordinates is equivalent to the fact that the algorithm works for $x$-monotone polygons (imagine virtually adding a point far enough to close the polygon). 

De {\em et al.}~\cite{dnr-chlproslws-12} used a similar strategy for the more difficult problem of computing the convex hull of a set of points (instead of a simple polygon) in memory-constrained workspaces. Their algorithm runs in $O(n^{1.5+\varepsilon})$ time using $O(\sqrt{n})$ space (for any $\sqrt{\frac{\log \log n}{\log n}}<\varepsilon<1$).  

Br\"onnimann and Chan~\cite{bc-seacchspllt-06} also modified the method of Lee~\cite{l-ofchsp-83} so as to obtain several linear-time algorithms using memory-reduced workspaces. However, their model of computation allows in-place rearranging (and sometimes modifying) the vertices of the input, and therefore does not fit into the memory-constrained model considered here. 


\item[Triangulation of a monotone polygon] The memory-constrained version of this problem was studied by Asano {\em et al.}~\cite{abbkmrs-mcasp-11}. In that paper, the authors give an algorithm that triangulates \emph{mountains} (a subclass of monotone polygons in which one of the chains is a segment). Combining this result with a trapezoidal decomposition, they give a method to triangulate a planar straight-line graph. Both operations run in quadratic-time in an $O(1)$-workspace. 

\item[Shortest path computation] Without memory restrictions, the shortest path between two points in a simple polygon can be computed in $O(n)$ time~\cite{gh-ospqsp-89}.
Asano {\em et al.}~\cite{amrw-cwagp-10} gave an $O(n^2)$ time algorithm for solving this problem with $O(1)$-workspace, which later was extended to $O(s)$-workspaces~\cite{abbkmrs-mcasp-11}. Their algorithm starts with a (possibly quadratic-time) preprocessing phase that consists in repeatedly triangulating $\Poly$, and storing $O(s)$ diagonals of $\Poly$ that partition it into $O(s)$ subpieces of size $O(n/s)$ each. Once the polygon is triangulated, they compute the geodesic between the two points in $O(n^2/s)$ time by navigating through the sub-polygons. Our triangulation algorithm removes the preprocessing overhead of Asano {\em et al.} when restricted to monotone polygons.

\item[Optimal 1-dimensional pyramid approximation] This problem is defined as follows: given an $n$-dimensional vector $f = (x_1,\ldots,x_n)$, find a unimodal vector $\phi = (y_1,\ldots,y_n)$ that minimizes the squared $L_2$-distance $|| f - \phi ||^2= \sum_{i=1}^n (x_i - y_i)^2$.
Linear time and space algorithms for the problem exist~\cite{cst-ltaacspc-06}, but up to now it had not been studied for memory-constrained settings.

\item[Visibility polygon (or profile) of a point in a simple polygon] This problem has been extensively studied for memory-constrained settings. Specifically, Asano {\em et al.}~\cite{amrw-cwagp-10} asked for a sub-quadratic algorithm for this problem in $O(1)$-workspaces. Barba {\em et al.}~\cite{bkls-cvpufv-11} provided a space-time trade-off algorithm that runs in $O(\frac{nr}{2^{s}}+n\log^2 r)$ time (or $O(\frac{nr}{2^{s}}+n\log r)$ randomized expected time)  using $O(s)$ variables (where $s\in O(\log r)$, and $r$ is the number of reflex vertices of $\Poly$). 

\end{description}

\subsection{Comparison with existing techniques}\label{sec_similar}

Most of the memory-constrained algorithms mentioned in the previous section use different techniques than ours, since each of them is designed to solve some specific problem. In this Section we discuss those that use similar techniques to the ones we use in this paper. 

Intuitively speaking, our approach is to partition the input into blocks, and run the algorithm as usual. In order to save space we  only store explicitly the information of the last two blocks that have been processed, while storing some small amount of information about the previous blocks. Whenever some information of older blocks is needed we {\em reconstruct} it. Reconstruction is obtained by re-executing the algorithm, but only on the missing portion.
The main contribution of our method is that it is general and does not depend on the problem being solved. As long as the algorithm is a stack algorithm, our technique can be applied. 


Among the existing algorithms that work in our memory-constrained model, the most similar result is due to De {\em et al.}~\cite{dmn-seavpsp-12}, who studied the memory-constrained version of computing the visibility polygon of a point in a simple polygon. Their approach is roughly the same as the one that we use in Section~\ref{sec_lin} for $O(\sqrt{n})$-workspaces.\footnote{This result was discovered in parallel and independently of our research.} Unfortunately, a recent article by Abrahamsen~\cite{a-aoaceevsp-13} claims that the algorithm of De {\em et al.} is incorrect; more details can be found in~\cite{a-cwavpp-uc-13}. These errors can probably be fixed by taking into account the winding of the polygon (as done in~\cite{js-clvpa-87}). 
Nevertheless, their approach only works for the computation of the visibility polygon.

In another recent paper, Barba et al.~\cite{bkls-cvpufv-11} also provided a time-space trade-off for the visibility polygon problem. 
The approach used there is based on partitioning the input into pieces and computing the visibility within each block independently. 
However, that algorithm uses specific geometric properties of visibility polygons, avoiding the need to reconstruct blocks, so in principle it cannot be extended to solve other problems. 

Chan and Chen~\cite{cc-mpga-07} used a similar idea for the computing the convex hull of a set of points sorted in the $x$ coordinates in memory-constrained workspaces. Using particular geometric properties of the input, they can show independence between blocks and avoiding the need for reconstruction. De {\em et al.}~\cite{dnr-chlproslws-12} also used the same idea for the general case. %
 In both cases, the computation of tangencies between blocks ends up becoming the time bottleneck of the algorithm. 



Finally, there exist several time-space trade-off methods for recognizing context-free grammars (CFGs)~\cite{c-dcfl-79,bcmv-rdcfl-83,rd-progmdcfl-00}. The time-space product for these results is roughly quadratic, which is not as good as what our methods give (for $\omega(1)$ workspaces). The standard algorithm for recognizing CFGs also uses a stack, thus one could think that our approach would improve their results. However, we note that context-free grammar recognition algorithms do not fall within the class of stack algorithms (for example: unlike in our model, a single element of the input could push $w(1)$ values into the stack). Thus, our approach cannot be used in their context (and {\em vice versa}). 

\subsection{Stack Algorithms}
Let \alg\ be a deterministic algorithm that uses a stack, and possibly other data structures \DS\ of total size $O(1)$. We assume that \alg\ uses a generalized stack structure that can access the last $k$ elements that have been pushed into the stack (for some constant $k$). That is, in addition to the standard \textsc{push} and \textsc{pop} operations, we can execute \textsc{top}$(i)$ to obtain the $i$-th topmost element (for consistency, this operation will return $\emptyset$ if either the stack does not have $i$ elements or $i>k$).

\begin{algorithm}
  \begin{algorithmic}[1]
    \STATE Initialize stack and auxiliary data structure \DS\ with $O(1)$ elements  from $\mathcal{I}$ 
    \FORALL{subsequent input $a \in \mathcal{I}$}
         \WHILE{some-condition($a$,\DS,\textsc{stack}.\textsc{top}(1),\ldots, \textsc{stack}.\textsc{top}($k$))} \label{algcondi}
          \STATE \textsc{stack}.\textsc{pop}
          \ENDWHILE        
          \IF{another-condition($a$,\DS,\textsc{stack}.\textsc{top}(1),\ldots, \textsc{stack}.\textsc{top}($k$))} \label{algocondi2}
          \STATE \textsc{stack}.\textsc{push}($a$) \label{algopush}
          \ENDIF  
    \ENDFOR
    \STATE Report(\textsc{stack})
  \end{algorithmic}
\caption{Basic scheme of a stack algorithm}
\label{alg:scheme}
\end{algorithm}

\ShoLong{} {
We consider algorithms \alg\ that have the structure shown in Algorithm~\ref{alg:scheme}. In such algorithms, the input is a list of elements $\mathcal{I}$, and the goal is to find a subset of $\mathcal{I}$ that satisfies some property. 
The algorithm solves the problem in an incremental fashion, scanning the elements of $\mathcal{I}$ one by one. At any point during the execution, the stack keeps the values that form the solution  up to now. When a new element $a$ is considered, the algorithm pops all values of the stack that do not satisfy certain condition, and if $a$ satisfies some other property, it is pushed into the stack. Then it proceeds with the next element, until all elements of $\mathcal{I}$ have been processed. The final result is normally contained in the stack, and at the end it is reported. This is done by simply reporting the top of the stack, popping the top vertex, and repeating until the stack is empty (imagine adding a virtual extra element ad the end of the input). }

We say that an algorithm that follows the scheme in Algorithm~\ref{alg:scheme} is a {\em stack} algorithm. Our scheme is mainly concerned on how the stack is handled. Thus, in principle one can make alterations to the scheme, provided that the treatment of the stack does not change and the extra operations do not affect the asymptotic running time of the algorithm. 
For simplicity of exposition, we assume that only values of the input are pushed, and that all values are different (see line~\ref{algopush} of Algorithm~\ref{alg:scheme}). 
In practice, this assumption can be removed by using the index $i$ of the value as identifier, and pushing any necessary additional information in a tuple whose identifier is $i$---as long as the tuple has $O(1)$-size. Note that we scan sequentially, hence the index is always known.


\section{Compressed stack technique}\label{sec_stack}
In this section we give a description of our compressed stack data structure, as well as the operations needed for the stack to work. First, we consider the case in which our workspace can use exactly $\Theta(\sqrt{n})$ variables. Afterwards, we generalize it to other workspace sizes.

\subsection{For $\Theta(\sqrt{n})$-workspaces}\label{sec_lin}

As a warm-up, we first show how to reduce the working space to $O(\sqrt{n})$ variables without increasing the asymptotic running time. For simplicity of exposition we describe our compressed stack technique for the case in which \alg\ only accesses the topmost element of the stack (that is, $k=1$). The general case for $k>1$ will be discussed in Section~\ref{seclargek}. 

Let $a_1, \ldots, a_n\in \mathcal{I}$ be the values of the input, in the order in which they are treated by \alg. 
\ShoLong{}{Although we do not need to explicitly know this ordering, recall that we assume that given $a_i$, we can access the next input element in constant time.} In order to avoid explicitly storing the stack, we virtually subdivide the values of $\mathcal{I}$ into blocks $B_1, \dots, B_{p}$, such that each block $B_i$ contains $n/p$ consecutive values\footnote{For simplicity of exposition, we assume that $p$ divides $n$. The general case will be considered in Section~\ref{sec_gen}.}. In this section we take $p=\sqrt n$. Then the size of each block will be $n/p=p=\sqrt{n}$. 
 Note that, since we scan the values of $\mathcal{I}$ in order, we always know to which block the current value belongs to.

We virtually group the elements in the stack according to the block that they belong to. Naturally, the stack can contain elements of different blocks. However, by the scheme of the algorithm, we know that all elements of one block will be  pushed consecutively into the stack. That is, if two elements of block $B_i$ are in the stack, all elements in the stack between them must also belong to block $B_i$. We call this the {\em monotone property} of the stack. Also, if at some point during the execution of $\alg$ there are elements from two blocks $B_i$ and $B_j$ in the stack (for some $i<j$), we know that those of $B_j$ must be closer to the top of the stack than those of $B_i$. We call this the {\em order} property. 


We introduce another important property of the stack, which we call the {\em invariant} property.
Suppose a value $a$ is in the stack at times $t$ and $t'$, $t<t'$, during the execution of \alg. 
Then at all times between $t$ and $t'$, $a$ is in the stack and the portion of the stack below $a$ is the same.
This follows from the fact that to remove $a$ (or some element below), we must first pop $a$. However, no value is pushed twice in a stack algorithm, thus it cannot afterwards be present in the stack at time $t'$. 

At any point of the execution, let $\BFir$ be the block that contains the element that was pushed last  into the stack. 
Note that it can happen that such element is not in the stack anymore (because it was popped by another value afterwards).
Similarly, let $\BNex$ be the topmost block with elements in the stack other than $\BFir$ (or $\BNex=\emptyset$ if no such block exists).
An important difference between $\BFir$ and  $\BNex$ is that
$\BFir$ may have no element currently in the stack, whereas  $\BNex$ always has elements in the stack, as long as $\BNex \neq \emptyset$.
%
%

Similarly, we use $\Fir$ and $\Nex$ to refer to the portions of the stack containing the elements of $\BFir$ and $\BNex$, respectively. 


We observe some properties of $\BFir$ and $\BNex$: by the order and monotonicity properties of the stack, the elements of $\Fir$ and $\Nex$ will form the top portion of the stack. Moreover, by definition of $\Nex$, the only situations in which $\Nex = \emptyset$ are when the stack contains elements from a single block or the stack is empty.

Since $\Fir$ and $\Nex$ contain values from two blocks, the portion of the stack that they define will not contain more than $2p = 2\sqrt{n}$ elements and thus, it can be stored explicitly. 
At any point during the execution we keep $\Fir$ and $\Nex$ in full. For any other block of the input, we only store the first and last elements of this block that are currently in the stack. We say that these blocks are stored in \emph{compressed} format. Note that we do not store anything for blocks that have no element currently into the stack. 

For any input value $a$, we define the {\em context} of $a$ as the content of the auxiliary data structure \DS\ 
 right after $a$ has been treated. Note that this context occupies $O(1)$ space in total. For each block that contains one or more elements in the stack (regardless of whether the block is stored explicitly or in compressed format) we also store the context of its first element that is in the stack. 

Therefore for most blocks we only have the topmost and bottommost elements that  are present in the stack, denoted $a_t$ and $a_b$, respectively, as well as the context of $a_b$. However, there could possibly be many more elements that we have not stored. For this reason, at some point during the execution of the algorithm we will need to reconstruct the missing elements in between. In order to do so, we introduce a \textsc{Reconstruct} operation. Given $a_t$, $a_b$ and the context of $a_b$,  \textsc{Reconstruct} explicitly recreates all elements between $a_b$ and $a_t$ that existed in the stack right after $a_t$ was processed. \ShoLong{}{A key property of our technique is that this reconstruction can be done efficiently:}

\begin{lemma}\label{lem_reconst}
\textsc{Reconstruct} runs in $O(m)$ time and uses $O(m)$ variables, where $m$ is the number of elements in the input between $a_b$ and $a_t$ in $\mathcal{I}$.
\end{lemma}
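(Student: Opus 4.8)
The plan is to implement \textsc{Reconstruct} by simply re-running \alg\ on the sub-interval of the input delimited by $a_b$ and $a_t$, initialized with the stored context of $a_b$, and to argue that this partial run faithfully reproduces the relevant portion of the stack. First I would set up the reconstruction to start processing the input at $a_b$, with the auxiliary structure \DS\ set to the context saved alongside $a_b$ (i.e.\ the state of \DS\ right after $a_b$ was treated in the original run), and to halt as soon as $a_t$ has been processed. The key claim is that this local execution produces exactly the same stack contents (from $a_b$ upward) that \alg\ had immediately after processing $a_t$ in the full run.

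To prove that claim I would invoke the structural facts established earlier in the excerpt, principally the \emph{invariant property}. Since $a_b$ is, by hypothesis, in the stack from the moment it was first pushed and remains there at least until $a_t$ is processed, the invariant property tells us that the portion of the stack at and below $a_b$ is frozen throughout this interval. Thus nothing below $a_b$ is ever touched, which is precisely what licenses us to treat $a_b$ as the bottom of a self-contained stack. The next step is to argue that every conditional tested in Algorithm~\ref{alg:scheme} during the interval depends only on information that the reconstruction has available: the conditions read only the current input value, the auxiliary data \DS, and the top (here $k=1$) element of the stack. Because the portion below $a_b$ never participates in any such test during the interval, the local run, starting from the correct \DS\ and with $a_b$ on top, will evaluate every \emph{some-condition} and \emph{another-condition} identically to the original run.

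From determinism the conclusion follows cleanly: \alg\ is deterministic, so identical inputs, identical \DS, and identical visible stack tops force an identical sequence of pushes and pops. Hence after $a_t$ is processed the reconstructed stack coincides with the original stack above $a_b$, which is exactly the set of elements \textsc{Reconstruct} is required to output. I would also note, for correctness, that no input value is pushed twice in a stack algorithm, so there is no danger of the partial run re-pushing something that had been consumed before $a_b$.

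Finally, for the complexity bounds, the analysis is immediate: the reconstruction processes only the $m$ input elements lying between $a_b$ and $a_t$, each contributing a constant number of stack operations, giving $O(m)$ time; and the stack it builds can hold at most these $m$ elements plus $a_b$, so together with the $O(1)$ auxiliary data the space is $O(m)$. The main obstacle, and the step deserving the most care, is the correctness argument rather than the counting: one must make sure that the conditions genuinely never inspect the stack below $a_b$, which is where the invariant property (ensuring the bottom portion is untouched) and the $k=1$ restriction (ensuring only the top element is read) are doing the real work.
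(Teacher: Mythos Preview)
Your proposal is correct and follows essentially the same approach as the paper: re-run \alg\ from $a_b$ with the stored context, stop at $a_t$, and argue via determinism plus the fact that conditions depend only on \DS\ and the stack top that the partial run matches the original. The only cosmetic difference is that the paper phrases the invariant-property step as ``the elements below $a_t$ are preserved'' whereas you phrase it as ``the portion below $a_b$ is frozen,'' but these are two sides of the same observation.
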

\begin{proof}
The way to implement \textsc{Reconstruct} is applying the same algorithm \alg, but starting from $a_b$, initialized with the context information stored with $a_b$, and stopping once we have processed $a_t$.  It suffices to show that running \alg\ in this partial way results in the same (partial) stack as running \alg\ for the whole input. The conditions evaluated in a stack algorithm (recall its structure in Algorithm~\ref{alg:scheme}) only use the context information: \DS\ and the top element of the stack. In particular, the results of local conditions tested during the execution of \alg\ between $a_b$ and $a_t$ will be equal to those executed during the execution of the \textsc{Reconstruct} procedure. 

Since \alg\ is a deterministic algorithm, the sequence of pops and pushes must be the same in both runs, hence after treating $a_t$ we will push $a_t$ in the stack (since it was pushed in the previous execution). By the invariant property of the stack, the elements in the stack below $a_t$ are preserved, and in particular we can explicitly reconstruct the portion of the stack between $a_t$ and $a_b$. Since there are at most $m$ input values between $a_b$ and $a_t$, the size of the stack during the reconstruction is bounded by $O(m)$. 
\end{proof}

Each time we invoke procedure  \textsc{Reconstruct} we do so with the first and last elements that were pushed into the stack of the corresponding block. In particular, we have the context of $a_b$ stored, hence we can correctly invoke the procedure. Also note that we have $m\leq n/p=p=\sqrt{n}$, hence this operation uses space within the desired bound. In order to obtain the desired running time, we must make sure that not too many unnecessary reconstructions are done. 

\begin{figure}[tb]
  \centering
\centering
\includegraphics{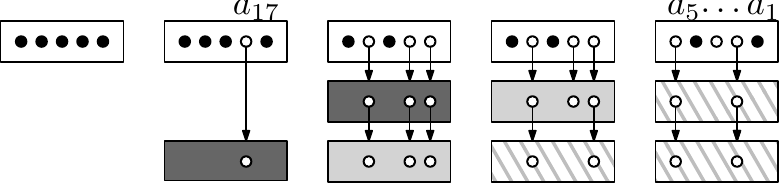}
\caption {Push operation: the top row has the $25$ input values partitioned into blocks of size $5$ (white points indicate values that will be pushed during the execution of the algorithm; black points are those that will be discarded). The middle and bottom rows show the situation of the compressed stack before and after $a_{17}$ has been pushed into the stack. Block $\Fir$ is depicted in dark gray, $\Nex$ in light gray, and the remaining compressed blocks with a diagonal stripe pattern.}
\label{fig:push}
\end{figure}

We now explain how to handle the push and pop operations with the compressed stack data structure. Recall that $\Fir$ and $\Nex$ are the only stack blocks to be stored explicitly. There are two cases to consider whenever a value $a$ is pushed: if $a$ belongs to the same block as the current top of the stack, i.e.  $\BFir$, it must go into $\Fir$ (and thus can be added normally to the stack in constant time). Otherwise $\alg$ has pushed an element from a new block, hence we must update $\Fir$ and $\Nex$ as follows: $\Fir$ will be a new portion of the stack that will only contain $a$. If the old $\Fir$ is empty, no change will happen to $\Nex$. Otherwise,  $\Nex$ becomes the previous $\Fir$ and we must compress the former $\Nex$ (see Fig.~\ref{fig:push}). All these operations can be done in constant space by smartly reusing pointers.

The pop operation is similar: as long as $\Fir$ contains at least one element, the pop is executed as usual. If $\Fir$ is empty, we pop values from $\Nex$ instead. The only special situation happens when after a pop operation the block $\Nex$ becomes empty. In this case, we must update block $\Nex$ (note that $\Fir$ remains empty, unchanged). For this purpose, we pick the topmost compressed block from the stack (if any) and reconstruct it in full. Recall that, by Lemma \ref{lem_reconst}, we can do so in $O(n/p)=O(\sqrt n)$ time using $O(n/p)=O(\sqrt{n})$ variables. The reconstructed block becomes the new $\Nex$ (and $\Fir$ remains empty).

\begin{theorem}\label{thm_comp1}
The compressed stack technique can be used to transform \alg\ into an algorithm that runs in $O(n)$ time and uses $O(\sqrt n)$ variables. 
\end{theorem}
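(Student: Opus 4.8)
The plan is to verify the two claimed bounds separately, reusing the machinery already in place: the block decomposition with $p=\sqrt{n}$ (so each block has size $\sqrt{n}$), the compressed representation of non-top blocks, and the reconstruction cost from Lemma~\ref{lem_reconst}. I would emphasize at the outset that the control flow of \alg\ is untouched, so that the only overhead introduced by the compressed stack lies in the push and pop operations and in the occasional reconstruction.

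For the space bound, I would argue that the only objects consuming more than $O(1)$ space are the two explicitly stored blocks \Fir\ and \Nex, the compressed records of the remaining blocks, and the temporary storage used during a reconstruction. Since \Fir\ and \Nex\ each hold values from a single block, each contains at most $\sqrt{n}$ elements, contributing $O(\sqrt{n})$. There are at most $p=\sqrt{n}$ blocks, and every block other than \BFir\ and \BNex\ is stored compressed using only its topmost and bottommost in-stack elements together with the $O(1)$-size context of the latter, so all compressed records together also use $O(\sqrt{n})$ space. Finally, a reconstruction is invoked on a single block, so by Lemma~\ref{lem_reconst} it uses $O(n/p)=O(\sqrt{n})$ temporary variables. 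Together with the $O(1)$-size structure \DS, the total is $O(\sqrt{n})$.

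For the time bound, I would first note that each push costs $O(1)$ by the pointer-reassignment argument already described, and since every input value is pushed at most once, all pushes together cost $O(n)$. Each pop also costs $O(1)$ except in the single special case where popping empties \Nex, which triggers a reconstruction costing $O(\sqrt{n})$ by Lemma~\ref{lem_reconst}. The crux is therefore to bound the number of such expensive pops, which I would do with a charging argument resting on the order and monotone properties of the stack. Because \alg\ scans the input in block order, once the block \BNex\ underlying \Nex\ has had all of its elements popped, the order property guarantees that no element of that block — nor of any earlier block — will ever be pushed again. Hence I can charge the cost of each reconstruction to the block that was just emptied, and no block is charged more than once.

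With $\sqrt{n}$ blocks and $O(\sqrt{n})$ cost per reconstruction, the total reconstruction time is $O(\sqrt{n}\cdot\sqrt{n})=O(n)$; adding the $O(n)$ cost of all pushes and the $O(n)$ cost of all cheap pops yields the claimed $O(n)$ running time. The main obstacle in making this rigorous is precisely the charging step: one must be certain that when \Nex\ empties, the block selected for reconstruction is genuinely a fresh block that has not been reconstructed before, so that the charges are disjoint. This is exactly what the order property supplies, and I would state that dependence explicitly rather than treat it as self-evident.
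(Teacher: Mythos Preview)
Your proposal is correct and follows essentially the same approach as the paper: the same charging argument (charge each reconstruction to the block that just emptied, using the order property to guarantee disjoint charges) and the same space accounting ($O(n/p)$ for the two explicit blocks plus $O(p)$ for the compressed records, with $p=\sqrt{n}$). The only minor slip is in your final paragraph, where you momentarily speak of the \emph{reconstructed} block being ``fresh'' rather than the \emph{emptied} block not being charged twice; the paper (and your earlier sentence) correctly phrases the charge as going to the emptied block $\BNex$.
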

\begin{proof}
The general workings of \alg\ remain unchanged, hence the difference in the running time (if any) will be due to push and pop operations. In most cases these operations only need constant time. The only situation in which an operation takes more than constant time is when a pop is performed and $\Nex$ becomes empty. In this situation we must spend $O(n/p)=O(\sqrt{n})$ time to reconstruct another block from the stack.

We now show that \textsc{Reconstruct}  is not invoked many times. Recall that this procedure is only invoked after a pop operation in which $\Nex$ becomes empty. We charge the reconstruction cost to $\BNex$ and claim that this block can never cause another reconstruction: by the order property, from now on \alg\ can only push elements of $\BFir$ or from a new block. In particular, \alg\ will not push elements from $\BNex$ again. That is, no element of $\BNex$ is on the stack (since $\Nex$ became empty) nor new elements of the block will be afterwards pushed (and thus $\BNex$ cannot cause another reconstruction). Thus, we conclude that no block can be charged twice, which implies that at most $O(n/p)$ reconstructions are done. Since each reconstruction needs $O(p)$ time, the total time spent reconstructing blocks is bounded by $O(n)$. 


Regarding space use, at any point of the execution we keep a portion of the stack containing at most $2(n/p) = 2\sqrt{n}$ elements in explicit form. The remainder of the stack is compressed into at most $p-2$ blocks. The top two blocks need $O(n/p)$ space whereas the remaining blocks need $O(1)$ space each. Hence the space needed is 
$O( n/p+p)$,
 which equals $O(\sqrt n)$ if $p=\sqrt n$.
\end{proof}

\subsection{General case}\label{sec_gen}


\begin{figure}[tb]
\centering
\includegraphics[width=1\textwidth]{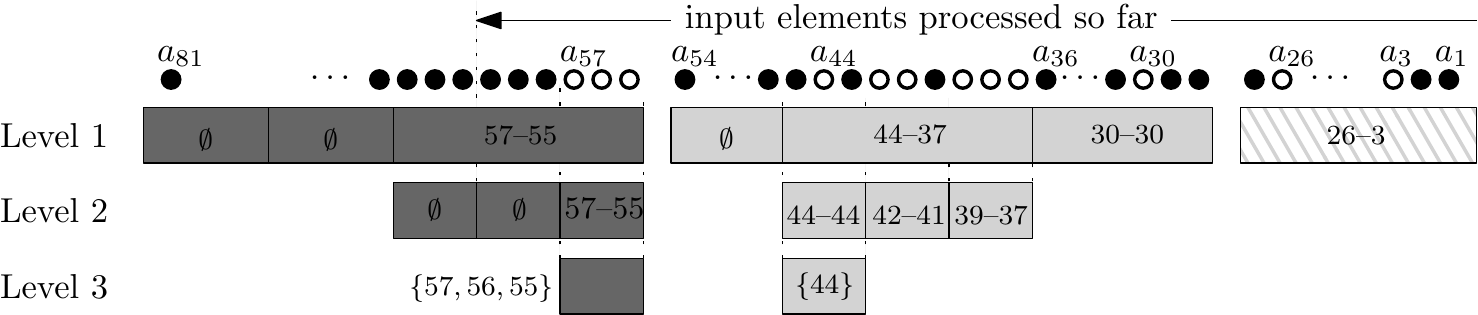}
\caption {A compressed stack for $n=81$, $p=3$ (thus $h=3$). The compression levels are depicted from top to bottom (for clarity, the size of each block is not proportional to the number of elements it contains). Color notation for points and blocks is as in Fig.~\ref{fig:push}. Compressed blocks contain the indices corresponding to the first and last element inside the block (or three pairs if the block is partially compressed); explicitly stored blocks contain a list of the pushed elements.}
\label{fig:CompressedStack}
\end{figure}

In the general case we partition the input into roughly $p$ blocks for any parameter $2\leq p\leq n$ (the exact value of $p$ will be determined by the user). At any level, we partition the input into $p$ blocks as follows:  let $k$ be the remainder of dividing $n$ by $p$. The first $k$ blocks have size $\lceil n/p\rceil$ whereas the remaining $p-k$ blocks have size $\lfloor n/p\rfloor$. From now on, for simplicity in the notation, we assume that $p$ divides $n$, and thus we can ignore all floor and ceiling functions. This assumption has no impact on the asymptotic behavior of the algorithm\footnote{Along the paper we use the Master Theorem to bound the running time of our algorithms. Strictly speaking, this theorem does not allow the use of floor and ceiling functions (thus, the bounds would not hold when $p$ does not divide $n$, and one of the blocks contains $\lceil n/p\rceil> n/p$ elements). This situation can be handled with a small case analysis, or alternatively by replacing the Master Theorem with the more general Akra-Bazzi Theorem~\cite{ab-otslre-98}.}. 

As the size of a block may be larger than the available workspace, instead of explicitly storing the top two non-empty blocks, we further subdivide them into $p$ sub-blocks. This process is then repeated for $h:=\log_p n-1$ levels until the last level, where the blocks are explicitly stored. 
 That is, in the general case we have three different levels of compression: a block is either stored (i) \emph{explicitly} if it is stored in full, (ii) \emph{compressed} if only the first and last elements of the block are stored, or (iii) \emph{partially compressed}, if the block is subdivided into $p$ smaller sub-blocks, and the first and last elements of each sub-block are stored. Analogously to the previous section, we store the context each time a block is created (i.e., whenever the first value of a block is pushed into the stack).

Blocks of different levels have different sizes, thus we must generalize the role of $\BFir$ and $\BNex$. In general, we define $\BFir_i$ as the block {\em in the $i$-th level} that contains the element that was pushed last into the stack. Likewise, $\BNex_i$ is the topmost block in the $i$-th level with elements in the stack (other than $\BFir_i$). We denote by $\Fir_i$ and $\Nex_i$ the portions of the stack of blocks $\BFir_i$ and $\BNex_i$, respectively. By definition of the blocks, we always have $\Fir_i\subseteq \Fir_{i-1}$. Moreover, either $\Nex_i \subseteq \Fir_{i-1}$ or $\Nex_{i} \subseteq \Nex_{i-1}$ hold (in addition, if the second case holds, we must also have $\Fir_{i}= \Fir_{i-1}$).

During the execution of the algorithm, the first (or highest) level of compression will contain $p$ blocks, each of which will have size $n/p$. We store $\Fir_1$ and $\Nex_1$ in partially compressed format, whereas the portion of the stack corresponding to all other blocks (if any) is stored in compressed format. That is, blocks $\Fir_1$ and $\Nex_1$  are further subdivided into $p$ sub-blocks of size $n/p^{2}$ each. 
In the $i$-th level of compression (for $1<  i< h $) $\Fir_i$ and $\Nex_i$ have size $n/p^{i}$ each, and are kept in partially compressed format. 
The sub-blocks corresponding to $\Fir_{i+1}$ and $\Nex_{i+1}$ are given to the lower level, which are again divided into sub-sub blocks, and so on. The process continues recursively until the $h$-th level, in which the subdivided blocks have size $n/p^h=n/p^{\log_p n-1}=p$. In this level the blocks are sufficiently small, hence we can explicitly store $\Fir_{h}$ and $\Nex_{h}$. 
%
See Fig.~\ref{fig:CompressedStack} for an illustration. 

\begin{lemma}\label{lem_space}
The compressed stack structure uses $O(p\log_p n)$ space.
\end{lemma}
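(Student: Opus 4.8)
The plan is to bound the total space by summing the contributions of all blocks across all $h$ levels of compression, treating the first level separately from the deeper levels since it can hold up to $p$ compressed blocks simultaneously, whereas each deeper level holds at most two explicitly-tracked blocks ($\Fir_i$ and $\Nex_i$).

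First I would analyze the topmost (first) level. Here the input is partitioned into $p$ blocks. The two special blocks $\Fir_1$ and $\Nex_1$ are stored in partially-compressed format, meaning each is subdivided into $p$ sub-blocks and we store the first and last element of each sub-block; this costs $O(p)$ space per block. All remaining blocks at this level (there are at most $p-2$ of them) are stored in compressed format, costing $O(1)$ space each, for a total of $O(p)$. Hence the first level uses $O(p)$ space in total.

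Next I would bound the space at each deeper level $i$, for $1 < i \leq h$. The key structural fact, already established in the excerpt via the nesting relations $\Fir_i\subseteq \Fir_{i-1}$ and $\Nex_{i} \subseteq \Fir_{i-1}$ or $\Nex_{i} \subseteq \Nex_{i-1}$, is that at every level below the first we only ever keep \emph{two} blocks: $\Fir_i$ and $\Nex_i$. For intermediate levels ($1<i<h$) these are stored partially-compressed (subdivided into $p$ sub-blocks, so $O(p)$ each), and at the lowest level $i=h$ they are stored explicitly, but since the blocks there have size exactly $n/p^h = p$, storing them explicitly still costs only $O(p)$ each. Thus each level contributes $O(p)$ space, regardless of the compression type used at that level.

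Finally I would combine the estimates: there are $h = \log_p n - 1$ levels, each contributing $O(p)$ space, so the total space is $O(ph) = O(p\log_p n)$, as claimed. I do not anticipate a serious obstacle here, since all the structural work (the nesting of $\Fir_i$ and $\Nex_i$, the sizes of blocks at each level, and the fact that only two blocks are tracked per deeper level) has already been set up in the preceding text. The one point requiring a moment of care is verifying that the lowest-level explicit blocks of size $p$ do not asymptotically exceed the $O(p)$ per-level budget — but since $n/p^h = p$ by the definition $h = \log_p n - 1$, this matches exactly. Everything then reduces to the straightforward sum $\sum_{i=1}^{h} O(p) = O(p\log_p n)$.
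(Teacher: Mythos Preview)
Your proposal is correct and follows essentially the same approach as the paper: separate the first level (where the $p-2$ compressed blocks plus the two partially-compressed blocks together cost $O(p)$) from the remaining levels (where only two blocks of $O(p)$ space each are kept), and sum over the $h=\log_p n-1$ levels to get $O(p\log_p n)$. The paper's proof is slightly terser but structurally identical.
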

\begin{proof}
At the first level of the stack we have $p$ blocks. The first two are partially-compressed and need $O(p)$ space each, whereas the remaining blocks are compressed, and need $O(1)$ space each. Since the topmost level can have at most $p$ blocks, the total amount of space needed at the first level is bounded by $O(p)$. 

At other levels of compression we only keep two partially-compressed blocks (or two explicitly stored blocks for the lowest level). Regardless of the level in which it belongs to, each partially-compressed block needs $O(p)$ space. Since the total number of levels is $h=\log_pn-1$, the algorithm will never use more than $O(p\log_pn)$ space to store the compressed stack.
\end{proof}

We now explain how to implement the push and pop operations so that the compressed stack's structure is always maintained, and the running time complexity is not highly affected.

\textbf{Push operation. }
A push can be treated in each level $i\leq h$ independently. First notice that by the way in which values of $\mathcal{I}$ are pushed, the new value $a$ either belongs to $\BFir_i$ or it is the first pushed element of a new block. In the first case, we register the change to $\Fir_i$ directly by updating the top value of $\Fir_i$ (or adding it to the list of elements if $i=h$). If the value to push does not belong to $\Fir_i$, then blocks $\Fir_i$ and $\Nex_i$ need to be updated. Essentially, the operation is the same as in Section~\ref{sec_lin}: $\Fir_i$ becomes a new block only containing $a$. Since we are creating a block, we also store the context of $a$ in the block. Also, if the former $\Fir_i$ was not empty, $\Nex_i$ becomes the previous $\Fir_i$. 

If $\Nex_i$ is updated, we should compress the old $\Nex_i$ as well. However, this is only necessary at the topmost level (i.e., when $i=1$). At other levels, the same block that should  be compressed is already stored in one level above. So, whenever necessary, we can query level $i-1$ and obtain the block in compressed format. For example, in Fig.~\ref{fig:CompressedStack}, level 2: if at some point we need to compress the partially compressed block $S_2$ (44-44, 42-41, 39-37), this is not actually needed since the same block is already compressed in level 1 (44-37). As in Section~\ref{sec_lin}, these operations can be done in constant time for a single level. 

\textbf{Pop operation.}
This operation starts at the bottommost level $h$, and it is then transmitted to levels above. Naturally, we must first remove the top element of $\Fir_h$ (unless $\Fir_h=\emptyset$ in which case we must pop from $\Nex_h$ instead). The easiest situation occurs when there are more elements from the same block in the stack. Recall that $\Fir_h$ and $\Nex_h$ are stored explicitly, thus we know which element of the stack will become the new top. Thus, we transmit the new top of the stack to levels above. In those levels, we need only update the top element of the corresponding sub-block and we are done.

A more complex situation happens when the pop operation empties either $\Fir_h$ or $\Nex_h$. In the former case, we transmit the information to a level above. In the higher level we mark the sub-block as empty and, if this results in an empty block, we again transmit the information to a level above, and so on. During this procedure several blocks $\Fir_i$ may become empty, but no block of type $\Nex_j$ will do so (since this would imply that $\Fir_i$ is included in a block $\Nex_{i-1}$, which cannot happen). Note that this is no problem, since in general we allow blocks $\Fir_i$ to be empty.

If block $\Fir_h$ is empty we must pop from $\Nex_h$ instead. Again, no more operations are needed unless $\Nex_h$ becomes empty. As before, we transmit the information to higher levels marking the sub-blocks as empty. We stop at a level $i$ in which block $\Nex_i$ is empty and in which $\Nex_{i-1}$ is not empty (or when $\Nex_1$ is empty). In order to preserve our invariant, we must invoke the reconstruction procedure to obtain the new blocks $\Nex_j$ for all $j\geq i$. Since $\Nex_{i-1}$ is not empty (even though $\Nex_{i}$ is), we can obtain the first and last elements of the block of $\Nex_i$ that are currently in the stack. If $i=1$ and we reached the highest level, we pick the first compressed block and reconstruct that one instead. In either case, the first and last elements of the block to reconstruct are always known. Note that it could also happen that at the highest level we have no more compressed elements. However, this corresponds to the case in which the stack contained a single element and thus, the stack is now empty.

\textbf{Block Reconstruction.}
This operation is invoked when the portion of a stack corresponding to a block $B$ needs to be reconstructed. 
 This operation receives: $(i)$ the first and last elements of $B$ that are in the stack (denoted $a_b$ and $a_t$, respectively), and $(ii)$ the context right after $a_b$ was pushed. Our aim is to obtain $\overline{B}$ in either explicit format (if $B$ is a block of the $h$-th level) or in partially-compressed format (otherwise). 

Block reconstruction is done in a similar way to Lemma~\ref{lem_reconst} for $O(\sqrt{n})$ variables: we execute \alg\ initialized with the given context information, starting with the next element of $\mathcal{I}$ after $a_b$. We process the elements of $\mathcal{I}$ one by one, stopping once we have processed $a_t$. During this re-execution, all stack elements are handled in an independent auxiliary stack $\mathcal{S_A}$. Initially, this stack only contains $a_b$, but whenever any element is pushed or popped the operation is executed on $\mathcal{S_A}$. 

Note that we cannot always store $\mathcal{S_A}$ explicitly, so we use a compressed stack. Let $j$ be the level of block $B$. The auxiliary stack is stored in a compressed stack of $h-j+1$ levels. That is, at the topmost element of $\mathcal{S_A}$ we only have one block of size $n/p^j$ (i.e., $\overline{B}$). This block is subdivided into $p$ sub-blocks, and so on. As usual, we store each block in partially-compressed format except at the lowermost level in which blocks have $p$ elements and are explicitly stored. Note that, in the particular case in which $j=h$ (i.e., $B$ is a block of the lowermost level), there is no compression. That is, we use a regular stack to store all elements.

Once we have finished processing $a_t$, the auxiliary stack $\mathcal{S_A}$ will contain $\overline{B}$ (in partially compressed format if $j<h$ or explicitly otherwise). Thus, we can return it and proceed with the usual execution of $\alg$. 

As an initial step, we disregard running time and just show correctness of the reconstruction. 

\begin{lemma}\label{lem_reconstgen}
\textsc{Reconstruct} procedure correctly reconstructs any block. Moreover, it will never use more than $O(p\log_p m)$ space, where $m$ is the number of elements between $a_b$ and $a_t$ in $\mathcal{I}$.
\end{lemma}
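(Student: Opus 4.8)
The plan is to prove the two assertions separately, reducing correctness to Lemma~\ref{lem_reconst} and the space bound to Lemma~\ref{lem_space}, with the recursive nature of the auxiliary stack $\mathcal{S_A}$ as the only genuinely new ingredient. For correctness I would mirror the proof of Lemma~\ref{lem_reconst}: the procedure re-runs \alg\ from $a_b$, initialized with the context stored when $a_b$ was pushed, and halts once $a_t$ has been processed. Every condition tested by a stack algorithm (lines~\ref{algcondi} and~\ref{algocondi2} of Algorithm~\ref{alg:scheme}) depends only on \DS\ and the top $k$ elements of the stack, and \alg\ is deterministic; hence the re-execution performs exactly the same sequence of pushes and pops as the original run over $[a_b,a_t]$, and by the invariant property the portion of the stack corresponding to $B$ is recreated faithfully. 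The one point beyond Lemma~\ref{lem_reconst} is that $\mathcal{S_A}$ is itself a compressed stack, so its own pop operations may invoke \textsc{Reconstruct} on deeper sub-blocks; I would close this by induction on the level $j$ of $B$. The base case $j=h$ is exactly Lemma~\ref{lem_reconst}, since $\mathcal{S_A}$ is then an ordinary one-level stack. In the inductive step every nested reconstruction concerns a level $>j$ (and the required context is available, as we store it whenever a block is created), so it is correct by the induction hypothesis; thus $\mathcal{S_A}$ faithfully simulates a plain stack and the argument above applies verbatim.

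For the space bound, observe that $B$ has level $j$, so $[a_b,a_t]$ contains $m\le n/p^{j}$ elements and is reconstructed inside a compressed stack $\mathcal{S_A}$ of at most $h-j+1$ levels. By Lemma~\ref{lem_space} this structure occupies $O(p\log_p m)$ space: each level stores $O(p)$ entries, while only the $O(\log_p m)$ levels needed to break $m$ elements down to size-$p$ explicit blocks ever hold content. It then remains to bound the cost of the recursive reconstructions. The key observation is that a sub-block reconstruction fired during a pop of $\mathcal{S_A}$ is carried out \emph{in place}: it rebuilds the emptied block $\Nex_i$ directly inside levels $i,\dots,h$ of the already-allocated $\mathcal{S_A}$, rather than allocating a fresh auxiliary stack, exactly as the pop of the main compressed stack rebuilds $\Nex_j$ within the existing structure. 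Hence the compressed-stack storage is shared across the whole recursion and never duplicated, and each nested call adds only its $O(1)$ local variables. Since every nested call strictly increases the level, the recursion depth is at most $h-j=O(\log_p m)$, contributing an extra $O(\log_p m)$ term dominated by $O(p\log_p m)$.

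The step I expect to be the main obstacle is precisely this space accounting for the recursion. A naive reading in which each nested \textsc{Reconstruct} spawns its own independent compressed stack of size $O(p\log_p m)$ would accumulate to $\Theta\!\left(p(\log_p m)^{2}\right)$ along a root-to-leaf path and break the claimed bound. The crux is therefore to make the in-place reuse explicit, so that only the $O(1)$-size call frames accumulate, which is what keeps the total at $O(p\log_p m)$.
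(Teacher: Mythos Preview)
Your proposal is correct and follows essentially the same approach as the paper: correctness reduces to Lemma~\ref{lem_reconst} plus the observation that nested reconstructions occur at strictly higher levels (you phrase this as induction on the level, the paper as a termination argument), and the space bound hinges on the same key insight that when a nested \textsc{Reconstruct} is triggered at level~$i$, the blocks $\Fir_j,\Nex_j$ for $j\ge i$ in the enclosing auxiliary stack are empty and their allocated space can absorb the child call. The paper words this as ``charging'' the child's auxiliary stack to the unused space in the parent, which is exactly your ``in place'' reuse; your explicit warning that a naive $\Theta\!\left(p(\log_p m)^2\right)$ accounting would fail is a useful addition that the paper leaves implicit.
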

\begin{proof}
Proof of correctness is analogous to the one given in Lemma~\ref{lem_reconst}: the algorithm is initialized with $a_b$ and the context of $a_b$. In particular, the conditions evaluated during the execution of \textsc{Reconstruct} will be equal to those of \alg. By the invariant property, the situation of the stack of both executions after $a_t$ is treated will be the same. Thus, we conclude that once the \textsc{Reconstruct} procedure finishes, the elements that are in $\mathcal{S_A}$ will be identical to those of $\overline{B}$. 

Observe that during the execution of \textsc{Reconstruct} a block of $\mathcal{S_A}$ might become empty, which could trigger a nested \textsc{Reconstruct} procedure. This second procedure could trigger another reconstruction, and so on. However, we claim that this cannot go on indefinitely. Indeed, by the invariant property we know that $a_b$ is never popped from $\mathcal{S_A}$, which in particular implies that the block associated to $B$ cannot become empty. Since this is the only block of level $k$, we conclude that any block that is inductively reconstructed will have level higher than $B$. Since the number of levels is bounded, so is the total number of reconstructions. Thus, we conclude that \textsc{Reconstruct} will terminate. 

We now bound the space that this algorithm needs. Ignoring recursive calls, \textsc{Reconstruct} procedure basically replicates the workings of $\alg$ for a smaller input (of size $m$ to be precise). As argued before, the execution of a \textsc{Reconstruct} procedure could invoke another reconstruction. However, each time this happens the level of the block to reconstruct increases (equivalently, the size of the block to reconstruct is decreased by at least a factor of $p$). Thus, at most $\log_p m -1$ nested reconstructions can happen at the same time.

The space bottleneck of the algorithm is the space needed for the compressed stack of the \textsc{Reconstruct} procedure (and all of the nested reconstructions). We claim that the total space never exceeds $O(p\log_p m)$ space: at any point during the execution of  $\alg$ with the compressed stack, there can be up to $O(\log_p m)$ nested \textsc{Reconstruct} procedures---one per each level. Each of them uses an auxiliary compressed stack that, according to Lemma~\ref{lem_space}, needs $O(p\log m)=O(p \log_p \frac{n}{p^i})$ space, where $i$ denotes the level of the block to reconstruct. 

However, the key observation is that if at some point of the execution of $\alg$, at some level $i$, a \textsc{Reconstruct} is needed, then blocks $\Fir_i$ and $\Nex_i$ must be empty, and so will blocks $\Fir_j$ and $\Nex_j$ for all lower levels (for $i+1 \leq j \leq h$). Therefore, if at any instant of time, we need to reconstruct a block of level $i$, we know that the compressed stack at that moment of time will have $\Omega(p \log_p m_i)$ space that is not being used (corresponding to the space unused for the blocks $\Fir_j$ and $\Nex_j$ for $i\geq j\geq h$). Thus, the additional space needed for the auxiliary compressed stack used in the \textsc{Reconstruct} procedure can be charged to this unused space. Thus, we conclude that a call to \textsc{Reconstruct} does not require any space other than the one already allocated to the original compressed stack.

\end{proof}

\begin{theorem}\label{theo_gen}
Any stack algorithm can be adapted so that, for any parameter $2\leq p \leq n$, it solves the same problem in $O(n^{1+\frac{1}{\log p}})$ time using $O(p\log_p n)$ variables.
\end{theorem}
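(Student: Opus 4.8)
The plan is to prove the space and time bounds separately, treating space as an almost immediate consequence of the lemmas already established and concentrating the real work on the running time. First I would observe that the top of the stack is always available in $O(1)$ time, so the control flow of \alg\ is untouched and only the stack bookkeeping matters. For space, Lemma~\ref{lem_space} gives $O(p\log_p n)$ for the compressed stack itself, and the only other consumer is \textsc{Reconstruct} together with its nested calls. Here I would quote Lemma~\ref{lem_reconstgen}: each active reconstruction needs $O(p\log_p m)\le O(p\log_p n)$ space, and this space is charged to the part of the main compressed stack that is provably idle when a reconstruction at a given level is triggered (all blocks $\Fir_j,\Nex_j$ below the triggering level are empty). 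Hence nesting does not increase the asymptotic footprint and the total stays $O(p\log_p n)$.

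For the running time I would set up a recurrence and finish with the Master Theorem. Let $T(m)$ denote the total time to run the machinery on $m$ consecutive elements using the appropriate $\log_p m$-level compressed stack, counting every nested reconstruction. Partition the $m$ elements into the $p$ top-level blocks of size $m/p$. The direct top-level bookkeeping---updating $\Fir_1,\Nex_1$, transmitting each push or pop one level, and triggering reconstructions---costs $O(1)$ per operation, hence $O(m)$ overall. All remaining work is attributable to the $p$ top-level blocks, and here I would argue that each such block is paid for at most twice: once while it is materialized and active (its $m/p$ elements are processed through the lower $h-1$ levels, a self-similar subproblem of cost $T(m/p)$), and at most once more when it is reconstructed, since by the charging argument of Theorem~\ref{thm_comp1} a block whose portion $\Nex_1$ has emptied is never pushed again and so is rebuilt at most once. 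Every deeper (level $>1$) reconstruction happens inside the active processing or the single rebuild of some top-level block, so it is already accounted for inside a $T(m/p)$ term. This yields, with base case $T(p)=O(p)$,
\[ T(m)\ \le\ 2p\,T(m/p)+O(m). \]

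Solving this recurrence gives the theorem: with $a=2p$ and $b=p$ we have $\log_b a=\log_p(2p)=1+\log_p 2=1+\tfrac{1}{\log p}$, and since the additive term $O(m)$ is polynomially smaller than $m^{\log_b a}=m^{1+1/\log p}$ we are in Case~1 of the Master Theorem and obtain $T(n)=O\!\left(n^{1+1/\log p}\right)$; as a sanity check, substituting $p=\sqrt n$ (so $h=1$) recovers the linear-time bound of Theorem~\ref{thm_comp1}. \textbf{The main obstacle} is getting this recurrence right rather than solving it. The tempting shortcut is to reuse the analysis of Theorem~\ref{thm_comp1} verbatim and declare that reconstructing a block costs time linear in its size; this is false for $p\ll n$, because such a block is itself too large to rebuild explicitly and must be re-processed with its own compressed stack, making each reconstruction a recursive instance of the very problem being bounded. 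It is precisely the coefficient $2p$ (one charge for active processing, one for the unique reconstruction) that forces the exponent up to $1+\log_p 2=1+\tfrac{1}{\log p}$, whereas the naive linear-reconstruction bound would spuriously give $O(n\log_p n)$.
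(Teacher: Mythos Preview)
Your proposal is correct and follows essentially the same approach as the paper: the space bound via Lemmas~\ref{lem_space} and~\ref{lem_reconstgen}, and the time bound via the recurrence $T(m)\le 2p\,T(m/p)+\text{(lower order)}$ justified by charging each top-level block at most twice (once for its active processing, once for its unique reconstruction) and solved by the Master Theorem. The only cosmetic differences are that the paper defines $T(m)$ as the cost of \textsc{Reconstruct} alone (accounting for the $O(nh)$ push/pop bookkeeping separately) and writes the additive term as $O(p)$ rather than your $O(m)$, neither of which affects the analysis.
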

\begin{proof}
Notice that we can always determine the top element of the stack in constant time. Hence the main workings of $\alg$ are unaffected. By Lemma~\ref{lem_space}, the compressed stack never goes above our allowed space. The other data structures use only $O(1)$ space, hence we only need to bound the space used in the recursion. By Lemma \ref{lem_reconstgen}, we know that the space needed by \textsc{Reconstruct} (and any necessary recursive reconstructions) is bounded by $O(p\log_p n)$. 

In order to complete the proof we need to show that the algorithm indeed runs in the allotted time. First consider the additional time spent by the push and pop operations in the compressed stack. Each push operation needs a constant number of operations per level. Since each input value is pushed at most once, the total time spent in all push operations is $O(nh)$. If we ignore the time spent in reconstructions, the pop procedure also does a constant number of operations at each level, thus the same bound is obtained. 

We now bound the running time spent in all reconstructions by using a charging scheme: let $T(m)$ be the time used by \textsc{Reconstruct} for a block of size $m$ (including the time that may be spent in recursively reconstructing blocks of smaller size). 
Clearly, for inputs of size $p$ or less the reconstruction can be done in linear time by using $\alg$ and a regular stack (that is, $T(p)=O(p)$).

For larger pieces of the input, we consider $\alg$ as an algorithm that reconstructs the $p$ blocks of highest level. That is, if we are allowed to use a regular stack in all levels, we would have the recurrence $T(n)=pT(\frac{n}{p})+O(p)$. However, since we use a compressed stack, additional reconstructions might be needed. Recall that these reconstructions are only invoked whenever $\Nex_i$ becomes empty (for some $i\leq h$). As in the proof of Theorem~\ref{thm_comp1}, we know that block $\BNex_i$ cannot cause another reconstruction (since it has been totally scanned), thus we charge the cost of the reconstruction to that block. 

That is, each block can be charged twice: once when it is first reconstructed by $\alg$, and a second time when that block generates an empty $\Nex_i$. We note that each reconstruction of a block of size $n/p$ can cause another reconstruction of smaller levels. However, the cost of those reconstructions will be included in the corresponding $T(n/p)$ term. Thus, whenever we use a compressed stack the recurrence becomes $T(n)=2pT(\frac{n}{p})+O(p)$. By the Master Theorem, the solution of this recurrence is $T(n)=O(n^{\log_{p}2p})=O(n^{\log{2p}/\log{p}})=O(n^{1+\frac{1}{\log p}})$. Thus, in total the running time of $\alg$ becomes $O(nh+n^{1+\frac{1}{\log p}})$. However, because $h=\log_p n- 1\in O(n^{\frac{1}{\log p}})$, the first term can be ignored since it will never dominate the running time. 
%
%
\end{proof}

\section{Faster algorithm for green stack algorithms}\label{sec_green}
In this section we consider the case in which \alg\ is green. Recall that the condition for an algorithm to be green is to have a \lp\ operation. The intuition behind this operation is as follows: imagine that \alg\ is treating value $a_c$ which would pop one or more elements and empty $\Nex_i$ for some $i<h$. In this case, we would need to rescan a large part of the input to reconstruct the next block when we actually only need to obtain the top $2p$ elements that are in the stack (i.e., since we only need to store the next block in partially-compressed format). Moreover, we more or less know where these elements are located, thus we could ideally search locally.

Given an input value $a_q$, the \emph{upper neighbor} of $a_q$ when treating $a_c$ is the element $\unb{q} \in \mathcal{I}$  with the smallest index strictly larger than $q$ such that $\unb{q}$ is still in the stack after $a_c$ is processed by \alg, i.e., after all pops generated by $a_c$ have been performed.
Analogously, the \emph{lower neighbor} of $a_q$ when treating $a_c$  is the element $\lnb{q} \in \mathcal I$  with the largest index smaller or equal than $q$ such that $\lnb{q}$ is still in the stack after processing $a_c$.


For efficiency reasons, we restrict the procedure to look only within a given interval of the input. Thus, procedure \lp\ receives four parameters: $(i)$ the input value $a_c$ that generates the pop, $(ii)$ two input values $a_t,a_b\in \mathcal{I}$, such that $t>b$ and both $a_t$ and $a_b$ are in the stack immediately before processing $a_c$, and a query value $a_q$ such that $t\geq q \geq b$. \lp\ must return the pair $(\unb{q},\lnb{q})$, as well as their context. For consistency, whenever $\unb{q}$ (or $\lnb{q}$) does not lie between $a_t$ and $a_b$, procedure $\lp$ should return $\emptyset$ instead. 

In most cases, the elements that are in the stack satisfy some property (e.g., for the visibility problem, the points in the stack are those that are unobstructed by the the scanned part of the polygon). By asking for the stack neighbors of a query point, we are somehow asking for who is satisfying the invariant around $a_q$ (e.g., in visibility terms, we are asking what edge is visible from the fixed point in the direction of the query point). Note that this is the only non-transparent operation that the user must provide for the faster compressed stack method to work. 

In Section~\ref{sec_applis} we give several examples of green algorithms. In these cases, the \lp\ procedure consists of a combination of a binary search with the decision version of a problem in a fixed direction (i.e., what point is visible in a given direction?). These algorithms run in  $O(m\log m)$ time and use $O(s)$ variables, where $m=t-b$ (i.e., the number of input values between $a_t$ and $a_b$). Thus, from now on, for simplicity in the analysis we assume that the  \lp\ procedure runs in $O(m\log m)$ time and uses $O(1)$ space. 
Naturally, the result presented here adapt to slower/faster \lp\ procedures (or that use more space) by updating the running time/size of the workspace accordingly.

\subsection{Green stack algorithms in $O(\log n)$-sized workspaces}\label{subsec_ologn}

We first show how to use \lp\ so as to obtain a compressed stack structure that can run in small workspaces (say, $O(\log n)$ variables). Since we cannot store the whole compressed stack in memory, our aim is to use it until we run out of space.
 That is, apply the block partition strategy of the previous section, with $p=2$ for $h=s$ levels (recall that $s$ is our allowed workspace). The only difference in the data structure occurs at the lowermost level, where each block has size $n/2^s$. Although we would like to store the blocks of the lowest level explicitly, the size of a single block is too large to fit into memory (if $s\in o(\log n)$, we have $n/2^s\in \omega(s)$). Instead, we store $\Fir_s$ and $\Nex_s$ in compressed format. If we can use $O(\log n)$ variables, the lowermost level has constant size and thus can be explicitly stored. Recall that for each block in compressed or partially-compressed format we store the context of the first element that is pushed. Additionally, we store the context of the last element of each block as well.

\begin{lemma}\label{lem_spaceologn}
The compressed stack structure for $O(\log n)$-workspaces uses $O(s)$ space.
\end{lemma}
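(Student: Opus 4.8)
The plan is to mirror the counting argument used for Lemma~\ref{lem_space}, specialized to the two parameters that distinguish this construction: we set $p=2$ and we stop the recursion after $h=s$ levels. The key simplification that $p=2$ buys us is that storage is $O(1)$ per block at \emph{every} compression format. Indeed, a partially-compressed block records only the first and last element of each of its $p$ sub-blocks, so it uses $O(p)=O(1)$ space when $p=2$; a compressed block records a single pair of endpoints, also $O(1)$; and at this value of $p$ there are never more than $p=2$ blocks present at any single level. So the first thing I would establish is this uniform per-block and per-level bound.

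Next I would walk through the levels and bound the space each contributes. The topmost level partitions the whole input into exactly $p=2$ blocks, and these are precisely $\Fir_1$ and $\Nex_1$, both partially compressed, giving $O(1)$ space. Every intermediate level $1<i<s$ keeps exactly the two blocks $\Fir_i$ and $\Nex_i$ in partially-compressed form, again $O(1)$ space. The lowermost level $i=s$ needs a separate remark: a block there has size $n/2^s$, which for $s\in o(\log n)$ exceeds the workspace, so $\Fir_s$ and $\Nex_s$ cannot be stored explicitly and are instead kept in compressed format. I would point out that this replacement only \emph{decreases} the stored data (compression can never enlarge it), so the lowermost level still costs $O(1)$. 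Finally I would account for the context information: every stored block carries the $O(1)$-size context of its first and (by the additional bookkeeping described just before the lemma) its last element; since each level holds $O(1)$ blocks, the contexts add only $O(1)$ per level.

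Summing the $O(1)$ contribution over the $s$ levels then yields the claimed $O(s)$ bound, exactly as Lemma~\ref{lem_space} gives $O(p\log_p n)$ in the general case with $ph$ replaced here by $2s$. I do not expect a genuine obstacle in this lemma, since it is a routine space tally; the only point that warrants explicit attention is the lowermost level, where the switch from explicit to compressed storage departs from the earlier constructions, and I would make sure to note that this switch can only help the space bound rather than hurt it.
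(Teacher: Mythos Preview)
Your proposal is correct and follows exactly the same approach as the paper: observe that with $p=2$ each level stores only $O(1)$ information, then sum over the $s$ levels. The paper's own proof is a two-sentence version of your argument, so your level-by-level elaboration and the explicit remark about the lowermost level being compressed rather than explicit are more detail than strictly needed, but they are accurate and harmless.
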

\begin{proof}
At each level we store a constant amount of information. Since we only have $O(s)$ levels, the compressed stack data structure does not use more than $O(s)$ space.
\end{proof}


Our aim is to execute $\alg$ as usual, using \lp\ to reconstruct the stack each time an element generates one or more pops. Thus, if a value does not generate any pop, it is treated exactly as in Section~\ref{sec_gen}. The only difference in the push operation is the fact that the last level may be in compressed format, and thus less information is stored.

Values of the input that generate no pop are handled as before, thus we need only consider the case in which an element $a_c\in\mathcal{I}$ generates one or more pops. First we identify which blocks in the stack are emptied by $a_c$. For this purpose, we execute \lp\ limiting the search within the first and last elements of $\Fir_h$ and with query value $q=\textsc{stack}.\textsc{top}(1)$ (i.e., the current top of the stack). 
If $\lnb{q}\neq \emptyset$ , we know that $a_c$ does not pop enough elements to empty $\Fir_h$. Otherwise, we know that $\Fir_h$ has become empty, thus we make another \lp\ query, this time searching within $\Nex_h$ and with the top element of $\Nex_h$ as the query element. If again there does not exist a lower neighbor, we can mark both $\Fir_h$ and $\Nex_h$ as empty, and proceed to the next non-empty block (i.e., the largest $i$ such that $\Fir_i$ is not empty, or the largest $j$ such that $\Nex_j$ is not empty in case all $\Fir_i$ blocks are empty). In general, the algorithm executes \lp\ procedure limiting the search within the smallest non-empty block of the stack with the topmost element of that sub-block as the query value. If the block is destroyed we empty it and proceed to the next one, and so on. This process ends when either $\Fir_1 \cup \Nex_{1}=\emptyset$ (i.e., $a_c$ completely empties the stack) or we reach a level $i$ in which either $\Fir_i$ or $\Nex_i$ is not empty. 



Once this cascading has finished, we obtain the lower neighbor of $a_q$ (i.e., the first element of the stack that is not popped by $a_c$). Since \lp\ provides us with the context of this element, we can proceed with $\alg$ as usual and determine whether or not $\alg$ would push $a_c$ into the stack. 

Before analyzing the running time of this approach, we introduce the concept of complexity of an element of $\mathcal{I}$. Given a point $a\in\mathcal{I}$, let $\mathcal{D}_a$ be the collection of blocks that $a$ would destroy in the execution of~$\alg$. For any block $B\in \mathcal{D}_a$, let $|B|$ denote its size (i.e., $|B|=n/2^i$ if $B$ is a block of the $i$-th level), and let $m_a=\sum_{B\in \mathcal{D}_a}|B|$ be the {\em complexity} of $a$.

\begin{lemma}\label{lem_popologn}
An element of the input $a_c\in\mathcal{I}$ is correctly treated in $O(\frac{n\log n}{2^s}+m_a\log (m_a))$ time. Moreover, it holds that $\sum_{a\in\mathcal{I}}m_a=O(sn)$.
\end{lemma}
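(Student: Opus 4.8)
The plan is to split the claim into its two parts: the per-element time bound and the global sum bound. For the time bound, I would trace through the cascading procedure described just before the lemma. When $a_c$ is processed, the algorithm walks up the levels of the compressed stack, and at each level it invokes one \lp\ query to decide whether the current smallest non-empty block survives or is destroyed. Each \lp\ query is charged according to the interval it searches: the two kinds of queries are (i) queries that search within a block that is actually destroyed, and (ii) the single final query that identifies the lower neighbor at the level where the cascade stops. For the destroyed blocks, a \lp\ query on a block of size $m$ costs $O(m\log m)$ by the assumption on the \lp\ procedure, and since these blocks are exactly the ones making up $\mathcal{D}_{a_c}$, summing their costs gives $O(m_a \log m_a)$ (using that $\sum_{B\in \mathcal{D}_a}|B|\log|B| \le (\sum_B |B|)\log(\sum_B |B|) = m_a\log m_a$). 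The key point for the remaining additive term is that the final query—the one that does \emph{not} destroy a block—is executed on a block of the lowermost stored level, whose size is $n/2^s$; hence that single query costs $O(\tfrac{n}{2^s}\log\tfrac{n}{2^s}) = O(\tfrac{n\log n}{2^s})$. There are at most $s$ levels, but only a constant number of ``boundary'' queries that do not correspond to a destroyed block, so this term appears only once. Combining the two contributions gives the stated $O(\tfrac{n\log n}{2^s}+m_a\log m_a)$.

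For the global sum $\sum_{a\in\mathcal{I}}m_a=O(sn)$, I would use a charging argument analogous to the one in Theorem~\ref{thm_comp1} and Theorem~\ref{theo_gen}. Recall $m_a=\sum_{B\in\mathcal{D}_a}|B|$, so $\sum_{a}m_a = \sum_{a}\sum_{B\in\mathcal{D}_a}|B|$. I would reverse the order of summation and count, for each block $B$ that ever exists during the execution, how many input values destroy it. The crucial structural fact is that a given block $B$ can be destroyed only \emph{once} by the algorithm: once the portion of the stack corresponding to a block becomes empty (i.e., once $\Nex_i$ or $\Fir_i$ for that block is emptied), the order property guarantees that no element of that block is ever pushed again, so the block can never be destroyed a second time. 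Therefore each block contributes its size $|B|$ to the sum exactly once.

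It then remains to bound $\sum_{B}|B|$ over all blocks $B$ that are created during the execution, where $B$ ranges over all levels. At a fixed level $i$, the blocks partition the input into pieces of size $n/2^i$, so the total size of all level-$i$ blocks that could ever be created is $O(n)$ (each input value belongs to exactly one block per level). Summing over the $h=s$ levels gives $\sum_B |B| = O(sn)$, which is precisely the bound we want. The main obstacle I anticipate is being careful about what ``a block is destroyed'' means relative to the blocks that are merely emptied without a \lp\ query being charged to them, and making sure that every unit of $\log$-weighted cost in the first part is attributed either to a genuinely destroyed block or to the unique boundary query of size $n/2^s$; conflating these could spuriously introduce an extra factor of $s$ into the additive term. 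I would therefore state explicitly that only destroyed blocks enter $\mathcal{D}_a$ (hence $m_a$), while the single surviving-block query is accounted for separately by the $\tfrac{n\log n}{2^s}$ term.
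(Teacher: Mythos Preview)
Your argument for the global sum $\sum_a m_a = O(sn)$ is correct and matches the paper's: each block is destroyed at most once, and the total block size per level is $n$, over $s$ levels.

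The per-element time bound, however, has a gap. You assert that ``the final query---the one that does \emph{not} destroy a block---is executed on a block of the lowermost stored level, whose size is $n/2^s$.'' This is false. The cascade starts at level $h$ and climbs: if both $\Fir_h$ and $\Nex_h$ are emptied, the next query is issued at a higher level, on a block of size $n/2^i$ for some $i<h$. The surviving query is the \emph{last} one in the cascade, at whatever level the cascade stops, and that block can be much larger than $n/2^s$. So the $O(\tfrac{n\log n}{2^s})$ term cannot be attributed to the final query.

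The paper accounts for the two terms the other way around. The $O(\tfrac{n\log n}{2^s})$ term is the cost of the \emph{first} query, which is always at level $h$; this is paid unconditionally whenever $a_c$ triggers any pop. The final (surviving) query is absorbed into the $O(m_a\log m_a)$ term via a doubling argument: because $p=2$, the block queried at each step of the cascade has size at most twice the previous one, and all previous blocks were destroyed (hence lie in $\mathcal{D}_a$). Therefore the surviving block has size at most twice the last destroyed block, so its query cost is $O(m_a\log m_a)$ as well. You need this observation---that block sizes along the cascade grow geometrically with ratio $2$---to bound the one query that is not charged to a destroyed block.
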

\begin{proof}
Correctness of the algorithm is guaranteed by the \lp\ procedure: each time we invoke the procedure, we do so with a query element  $a_q$  that is in the stack (since it is the first element of some sub-block), for which we know that all other elements above $a_q$ have been already popped. In particular, the lower neighbor of $a_q$ will tell us the first element of the stack that is not popped by $a_c$. After that we proceed with \alg\ as usual, so correctness holds. 

We now show that the bound on the running time holds. For each element of $\mathcal{I}$ that generates a pop we always start by executing the \lp\ procedure with a block at the lowermost level. 
Since the size of this block is $n/2^s$, this operation runs in $O(\frac{n}{2^s}\log \frac{n}{2^s}) = O(\frac{n\log n}{2^s})$ time, and dominates the running time of treating $a_c$ unless both $\Fir_h$ and $\Nex_h$ are emptied. 

The second term can only dominate when $m_a$ is large (and thus, destroys $\Nex_h$ as well as other blocks). First observe that in this case we scan $O(m_a)$ elements of the input. Indeed, we invoke the \lp\ operation with portions of the input of increasing size, and only when the corresponding portions of the stack have been emptied. In particular, we stop as soon as a block is not emptied by the current element. Since the block size at most doubles between two consecutive levels, we will never scan more than $O(m_a)$ elements. In total, the running time of treating $a_c$ is bounded by $O(\frac{n}{2^s}\log n)+\sum_{B\in \mathcal{D}_a}O(|B|)\log O(|B|)=O(\frac{n\log n}{2^s}+m_a\log (m_a))$ as claimed. 

To prove that $\sum_{a\in\mathcal{I}}m_a=O(sn)$, note that a block (or sub-block) can only be destroyed once in the whole execution of $\alg$. Hence, the total complexity of all blocks of a fixed level is $n$. Since we only have $h=s$ levels, the claim follows. 
\end{proof}

By the above result, this approach correctly treats an element of the input. Before proceeding to the next value, we possibly have to reconstruct one (or more) blocks. Recall that whenever we need to reconstruct a block, we can obtain in constant time the two indices $a_t$ and $a_b$ corresponding to the first and last elements of $B$ that have been pushed into the stack. As before, these elements can be obtained because they are stored in one level above (or because they are fully compressed  at the highest level). 

If the workspace we are in allows us to use $O(\log n)$ variables, and the block to reconstruct is at the lowermost level, then the block has a constant number of elements. Thus, we can reconstruct it explicitly by simply executing $\alg$ on it. Otherwise, we must obtain block $B$ in partially compressed format (since no other block is stored explicitly). Moreover, since we fixed $p=2$, we only need to find the first and last elements of the two sub-blocks that were pushed into the stack. Rather than reconstructing the block recursively, we use another approach that we call \textsc{GreenReconstruct}. By construction, if  $a_t$ and $a_b$ belong to the same sub-block, we know that the other sub-block is empty (and we have all the information needed to reconstruct $B$). Otherwise, we simply invoke \lp\ delimiting the search between $a_t$ and $a_b$ where the query element is the last element of the first sub-block of $B$. By definition of this operation, we will obtain the two missing elements so as to reconstruct $B$ (see Figure~\ref{fig_query2}).

\begin{figure}[tb]
\centering
\includegraphics[width=.6\textwidth]{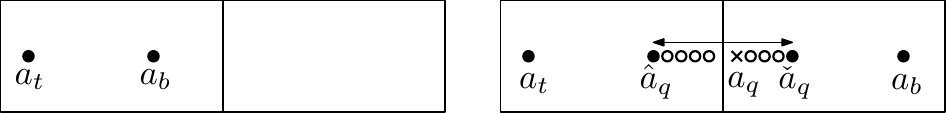}
\caption{Illustration of the \textsc{GreenReconstruct} procedure: (left) if $a_t$ and $a_b$ belong to the same sub-block, we know that the other one must be empty, thus $B$ can be reconstructed in constant time. Otherwise, a single execution of the \lp\ procedure with query $a_q$ as the last element of the second sub-block of $B$ will give us the other two elements needed to reconstruct $B$ (right).}
\label{fig_query2}
\end{figure}



\begin{lemma}\label{lem_reconstructgreen}
Procedure \textsc{GreenReconstruct} correctly reconstructs a block of $m$ elements in $O(m\log m)$ time. Moreover, the total time spent reconstructing blocks is bounded by $O(sn\log n)$.
\end{lemma}
\begin{proof}
Since we know that both $a_t$ and $a_b$ are in the stack when $a_c$ is processed (and we restrict the search within those two values), all of our \lp\ queries must return both an upper and lower neighbor (that is, the query cannot return $\emptyset$). Moreover, by the invariant property we know that any element that was in the stack in the moment that $a_t$ was pushed must remain in the stack when $a_c$ is processed. Thus, the reconstruction that we obtain through \lp\ will be the desired one. 





The running time of reconstructing a single block is dominated by the single execution of \lp\ operation, which takes $O(m\log m)$ time. In order to complete the proof we must show that not many reconstructions are needed. This is done with a charging scheme identical to the one used in Theorem~\ref{theo_gen}, and the analysis of Lemma~\ref{lem_popologn}: as in Section~\ref{sec_gen}, we only invoke $\textsc{GreenReconstruct}$ with a block of level $i$ when another block of the same level has been marked as empty. Hence, we charge the running time of reconstructing that block to the destroyed one. Since each block is only destroyed once, and no block can be charged twice, we will at most reconstruct $2^i$ blocks of level $i$. Each block has size $n/2^i$, hence we spend at most $2^i\times O(\frac{n}{2^i}\log \frac{n}{2^i})=O(n\log n)$ time reconstructing the elements of a single level. If we add this time among all $s$ levels, we obtain the claimed bound.
\end{proof}

We summarize the results of this section with the following statement.

\begin{theorem}\label{theo_timelogn}
Any green stack algorithm can be adapted so that it solves the same problem in $O(\frac{n^2\log n}{2^s}+n\log^2n)$ time using $O(s)$ variables (for any $s\in O(\log n)$).
\end{theorem}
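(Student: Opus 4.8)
The plan is to assemble the theorem directly from the three lemmas established in this section, handling the space bound and the running-time bound separately. First I would dispatch space: by Lemma~\ref{lem_spaceologn} the compressed stack itself occupies $O(s)$ variables once we fix $p=2$ and $h=s$ levels. A push maintains only $O(1)$ extra data per level, and by our standing assumption on the \lp\ procedure, \textsc{GreenReconstruct} uses only $O(1)$ additional space. The crucial point I would stress here is that, unlike the general reconstruction of Section~\ref{sec_gen}, \textsc{GreenReconstruct} is \emph{not} recursive: it resolves a block with a single \lp\ call (see Figure~\ref{fig_query2}), so there is no nested chain of auxiliary compressed stacks whose space must be charged. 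Hence the total workspace stays $O(s)$.

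For the running time I would split the work into three contributions and bound each. Each input value is pushed at most once and a push touches a constant amount of data per level across the $h=s$ levels, so all pushes together cost $O(ns)$, which is $O(n\log n)$ since $s\in O(\log n)$. For the cost of treating elements that generate pops I would sum the per-element bound of Lemma~\ref{lem_popologn} over all of $\mathcal{I}$:
\begin{equation*}
\sum_{a\in\mathcal{I}} O\!\left(\frac{n\log n}{2^s}+m_a\log m_a\right)
= O\!\left(\frac{n^2\log n}{2^s}\right) + \sum_{a\in\mathcal{I}} O(m_a\log m_a).
\end{equation*}
Bounding $\log m_a\leq \log n$ via $m_a\leq n$, and invoking the second statement of Lemma~\ref{lem_popologn} that $\sum_{a}m_a=O(sn)$, the residual sum is $O(sn\log n)$. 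By Lemma~\ref{lem_reconstructgreen} the total time spent in \textsc{GreenReconstruct} is likewise $O(sn\log n)$. Adding the three contributions and using $s\in O(\log n)$ to rewrite each $O(sn\log n)$ term as $O(n\log^2 n)$ yields the claimed $O(\tfrac{n^2\log n}{2^s}+n\log^2 n)$.

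The main obstacle is bookkeeping rather than a deep idea: I must make sure that every factor of $s$ that appears (from the $h=s$ levels in the pushes, from $\sum_a m_a=O(sn)$ in the treatment cost, and from the per-level charging in Lemma~\ref{lem_reconstructgreen}) is absorbed into a single $\log n$ through the hypothesis $s\in O(\log n)$, while keeping the $\frac{n^2\log n}{2^s}$ term isolated since it is the only contribution that improves with $s$ and arises solely from the mandatory lowermost-level \lp\ query performed once per popping element. The one spot demanding genuine care is verifying that no step silently reintroduces the recursive space blow-up of the general construction; this is precisely why the green regime replaces \textsc{Reconstruct} by the single-query, non-recursive \textsc{GreenReconstruct}, and I would make that substitution explicit in the argument.
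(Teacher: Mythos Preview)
Your proposal is correct and follows essentially the same assembly of Lemmas~\ref{lem_spaceologn}, \ref{lem_popologn}, and~\ref{lem_reconstructgreen} as the paper's own proof. The only cosmetic difference is that the paper closes with a short case split (\,$s\in o(\log n)$ versus $s=\Theta(\log n)$\,) to identify which of the two displayed terms dominates, whereas you simply absorb every $O(sn\log n)$ contribution into $O(n\log^2 n)$ via $s\in O(\log n)$ and leave the final bound as the stated sum; both arrive at the same conclusion.
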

\begin{proof}
Lemmas~\ref{lem_popologn} and ~\ref{lem_reconstructgreen} show that each element of $\mathcal{I}$ is properly treated, and that the invariant of the compressed stack is preserved at each step, respectively. Lemma~\ref{lem_spaceologn}  bounds the size the compressed stack data structure. Other than this structure, only a constant number of variables are used (even when executing the \lp\ procedure), hence we never exceed the allowed space.

Finally, we must argue that the algorithm runs in the specified the running time. The \textsc{Push} operation has not been modified, so it uses $O(1)$ space and time per level. Thus, the overall time spent in  \textsc{Push} operations is bounded by $O(ns)$. If we disregard the time needed for reconstructions, Lemma~\ref{lem_popologn} bounds the total time spent treating all elements of $\mathcal{I}$ by $O(\frac{n\log n}{2^s}+ns\log n)$. By the second claim of Lemma~\ref{lem_reconstructgreen}, the running time of executing all reconstructions is also bounded by $O(ns\log n)$.

Thus, in total we spend $O(ns+ \frac{n^2\log n}{2^s}+ns\log n)$ time and use $O(s)$ variables (for any $s\in O(\log n)$). In order to complete the proof we must distinguish between the $s\in o(\log n)$ and $s=\Theta(\log n)$ cases. Indeed, if $s\in o(\log n)$ we have $s\leq \log n/2$ and thus  $\frac{n^2\log n}{2^s} \geq \frac{n^2\log n}{2^{\log n/2}}= n^{1.5}\log n$. The other two terms are asymptotically smaller than $n\log^2n$, so the second term dominates. Whenever $s=\Theta(\log n)$ the third term dominates and is equal to $O(n\log^2n)$, thus the result follows.
%
\end{proof}

\subsection{Green algorithms for $\Omega(\log n)$-workspaces}\label{sec_hyb}
By combining both the standard and the green approach, we can obtain an improvement in the running time whenever $\Omega(\log n)$ variables are allowed. More precisely, we obtain a space-time trade-off between $O(n \log^2 n)$ time and $O(\log n)$ space, and $O(n\log^{1+\varepsilon} n)$ time and $O(n^\varepsilon)$ space for any $\varepsilon>0$. 

For this purpose, we introduce a {\em hybrid compressed stack} that combines both ideas. Specifically, we use the compressed stack technique for green algorithms for the first $h$ levels (where $h$ is a value to be determined later). That is, at the highest levels each block is partitioned into two, and so on. In particular, blocks at the $h$-th level have size $n/2^h$. For blocks at the $h$-th level or lower we switch over to the general algorithm of Section~\ref{sec_gen}: partition each block into $p$ sub-blocks (again, for a value of $p$ to be defined later). We stop after $\log_p (n/2^h) -1$ levels, in which the lowest blocks have $p$ elements and are explicitly stored. 

Blocks $\BFir_i$, $\BNex_i$, $\Fir_i$, and $\Nex_i$ are defined as usual. The handling of each block is identical to that of the corresponding section (i.e, if a block belongs to the $h$ topmost levels, then push, pop and reconstruct operations are executed as described in Section~\ref{subsec_ologn}; otherwise, we use the methods in Section~\ref{sec_gen}). 

\begin{lemma}\label{lem_runhyb}
The running time of executing $\alg$ with the hybrid compressed stack data structure is bounded by $O(nh\log n + 2^h (n/2^h)^{(1+1/\log p)})$.
\end{lemma}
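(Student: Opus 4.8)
The plan is to decompose the total running time of \alg\ run on the hybrid stack into three contributions---push operations, element treatments (the pops triggered while scanning $\mathcal{I}$), and block reconstructions---and to account separately for the two regimes: the $h$ topmost \emph{green} levels (binary, handled as in Section~\ref{subsec_ologn}) and the \emph{general} levels below them (the $p$-ary subdivision of each level-$h$ block, handled as in Section~\ref{sec_gen}).

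First I would dispose of the push operations. Each push performs $O(1)$ work per level, the number of levels is $h+\log_p(n/2^h)$, and since $n/2^h\geq p\geq 2$ we have $h\leq \log n$ and $\log_p(n/2^h)\leq \log_2(n/2^h)\leq \log n$, so the number of levels is $O(\log n)$. As each of the $n$ input values is pushed at most once, all pushes together cost $O(n\log n)$, which is absorbed by $nh\log n$ for $h\geq 1$. The $O(1)$-per-level bookkeeping of pops at the general levels is bounded the same way.

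Next I would bound the general (bottom) contribution, yielding the second term. The crucial point is that the lowest level is stored \emph{explicitly} (blocks of size $p$), so a single pop inside the current level-$h$ block costs only $O(1)$ per level and triggers, at worst, a general \textsc{Reconstruct}. Applying the charging argument of Theorem~\ref{theo_gen} \emph{within} a single level-$h$ block---which is simply an input of size $n/2^h$ processed by the $p$-ary compressed stack---the total cost of all general-level treatments and reconstructions attached to that block is $O((n/2^h)^{1+1/\log p})$. Since each of the $2^h$ level-$h$ blocks is destroyed once and hence expanded only $O(1)$ times (same charging as in Theorem~\ref{thm_comp1}), summing over all level-$h$ blocks gives $O(2^h(n/2^h)^{1+1/\log p})$.

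Finally I would bound the green (top) contribution, yielding the first term; this is the step I expect to be the main obstacle. The key is to argue that the expensive per-element \lp\ call does \emph{not} recur once per popping element, which would reintroduce the $n^2\log n/2^h$ term of Theorem~\ref{theo_timelogn}. Because the general bottom part, through its explicit lowest level, absorbs the one-element-at-a-time popping, an \lp\ call at a green level $i$ is triggered only when an entire level-$i$ block is emptied and the popping must propagate further up the compressed part. By the charging scheme of Lemmas~\ref{lem_popologn} and~\ref{lem_reconstructgreen}, each level-$i$ block is emptied (hence charged) only once, so we perform $O(2^i)$ such \lp\ calls and \textsc{GreenReconstruct} operations at level $i$, each on a block of size $n/2^i$ at cost $O(\tfrac{n}{2^i}\log n)$. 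This gives $O(n\log n)$ per green level and $O(hn\log n)$ over the $h$ green levels. Summing the three contributions yields $O(nh\log n + 2^h(n/2^h)^{1+1/\log p})$. The care needed lies precisely at the interface level $h$: reconstructing a level-$h$ block is carried out by running the general approach on it, so its cost must be attributed to the second term rather than the first.
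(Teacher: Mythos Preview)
Your proposal is correct and follows essentially the same approach as the paper: split the analysis between the $h$ green levels (bounded by Lemma~\ref{lem_reconstructgreen} at $O(n\log n)$ per level) and the $2^h$ level-$h$ subproblems (each bounded by Theorem~\ref{theo_gen} at $O((n/2^h)^{1+1/\log p})$). Your explicit justification for why the $n^2\log n/2^h$ term from Theorem~\ref{theo_timelogn} does not reappear---namely that the explicit bottom level of the general part absorbs the per-element \lp\ calls, so \lp\ is invoked at a green level only when an entire block is emptied---is a point the paper's brief proof leaves implicit, as is your remark about attributing the level-$h$ reconstruction to the second term.
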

\begin{proof}
Analysis is the same as in Theorems~\ref{theo_gen} and~\ref{theo_timelogn}, depending on the level. A push operation is handled in constant time per level, and since there are $h+\log_p (n/2^h)\leq \log n$ levels this cannot dominate the running time. Likewise, the time needed to handle a pop is also amortized in the number of destroyed blocks. 

The running time is dominated by the cost of the reconstruct operations. The first term of the expression comes from the higher levels, for which we use the \textsc{GreenReconstruct} operation: by Lemma~\ref{lem_reconstructgreen}, we spend $O(n\log n)$ per level (and there are $h$ levels). The second term arises from the lower levels, in which the recursive \textsc{Reconstruct} method is used. In this case, we simply apply Theorem~\ref{theo_gen} to each of the $2^h$ subproblems. Since each subproblem has size $n/2^h$, we obtain the claimed bound.
\end{proof}

In order to make the running time as small as possible, we choose the value of $p$ such that the two terms become equal: 
\begin{eqnarray}
nh\log n &=&2^h (n/2^h)^{(1+1/\log p)} \Leftrightarrow \\
h\log n&=&(n/2^h)^{(1/\log p)} \Leftrightarrow\\
\log h+\log\log n&=&\frac{\log (n/2^h)}{\log p} \Leftrightarrow\\
\log p&=&\frac{\log n - h}{\log h+\log\log n}
\end{eqnarray}

From Lemma~\ref{lem_runhyb}, we can see that shrinking $h$ by a constant factor only causes a constant multiplicative speed-up. In order to improve the result of Theorem~\ref{theo_gen}, we set $h = \log^a n$ for any constant $0<a<1$. This leads to a running time of $O(n \log^{(1+a)} n)$. 

\begin{theorem}\label{theo_hybrid}
Any green stack algorithm can be adapted so that it solves the same problem in $O(n \log^{(1+a)} n)$ time using $O(n^{\frac{1}{(1+a)\log \log n}}\log \log n)$ variables (for any fixed constant $0<a<1$).
\end{theorem}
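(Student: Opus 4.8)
The plan is to derive Theorem~\ref{theo_hybrid} directly from Lemma~\ref{lem_runhyb} by substituting the two parameter choices prescribed in the surrounding text, namely $h=\log^a n$ and the value of $p$ satisfying $\log p = \frac{\log n - h}{\log h + \log\log n}$. First I would plug $h = \log^a n$ into the running-time bound $O(nh\log n + 2^h(n/2^h)^{1+1/\log p})$. The first term becomes $O(n\log^{1+a} n)$ immediately. The second term, by the choice of $p$, has been arranged to equal the first (this is exactly what the displayed equation chain just before the statement establishes), so it too is $O(n\log^{1+a}n)$, giving the claimed time bound.

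The remaining work is to verify the space bound. Here I would compute the number of variables used by the hybrid structure. The top $h$ levels use the green compressed stack, which by Lemma~\ref{lem_spaceologn} costs $O(1)$ per level, hence $O(h)=O(\log^a n)$ in total. Each of the $2^h$ bottom blocks uses the general compressed stack of Section~\ref{sec_gen} with parameter $p$, but crucially only two such subproblems are ever stored explicitly at once (the blocks $\Fir_h$ and $\Nex_h$), so by Lemma~\ref{lem_space} the active space is $O(p\log_p(n/2^h))$. The dominant space contribution is $O(p)$, so I must show $p = O(n^{\frac{1}{(1+a)\log\log n}})$ and that the $\log_p(n/2^h)$ and $h$ factors do not worsen the stated bound. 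From $\log p = \frac{\log n - h}{\log h + \log\log n}$ and $h=\log^a n = o(\log n)$, the numerator is $\Theta(\log n)$ and the denominator is $\log(\log^a n)+\log\log n = a\log\log n + \log\log n = (1+a)\log\log n$, so $\log p = \frac{\log n}{(1+a)\log\log n}(1+o(1))$, whence $p = n^{\frac{1}{(1+a)\log\log n}(1+o(1))}$. Combining $p$ with the logarithmic factors ($\log_p(n/2^h) = O(\log\log n)$ since $\log p = \Theta(\log n/\log\log n)$) yields the stated space $O(n^{\frac{1}{(1+a)\log\log n}}\log\log n)$.

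The step I expect to be the main obstacle is pinning down the space bound rather than the time bound, since the time bound is essentially handed to us by the parameter-balancing computation. In particular, I would need to argue carefully that the recursive reconstructions in the lower (general) part do not blow up the workspace beyond $O(p\log_p(n/2^h))$; this is precisely the content of the charging argument in Lemma~\ref{lem_reconstgen}, which shows that nested reconstructions reuse the space already allocated to the emptied blocks, so they add nothing asymptotically. I would also need to confirm that the green part's $O(h)$ space and the general part's $O(p\log_p(n/2^h))$ space combine to the single clean expression in the statement; since $h = \log^a n$ is polylogarithmic while $p$ is genuinely polynomial in a slowly-growing exponent, the $p$-term dominates and absorbs everything else, which requires only checking that $\log^a n = o\bigl(n^{1/((1+a)\log\log n)}\bigr)$ — true because the right-hand side is $2^{\Theta(\log n/\log\log n)}$, which grows faster than any fixed power of $\log n$.
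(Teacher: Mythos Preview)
Your proposal is correct and follows essentially the same approach as the paper: invoke Lemma~\ref{lem_runhyb} for the time bound (with the balancing of the two terms already done in the preceding display), then compute the space as $O(h)$ for the top levels plus $O(p\log_p(n/2^h))$ for the bottom levels, substitute the chosen values of $h$ and $p$, and verify that the latter term dominates and simplifies to the stated bound. Your treatment is in fact slightly more careful than the paper's in two places---you explicitly note that $\log_p(n/2^h)=(1+a)\log\log n$ and you check that the polylogarithmic $h$ term is absorbed by the $p$ term---but the route is the same.
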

\begin{proof}
The bound on the running time is given by Lemma~\ref{lem_runhyb}, thus we focus on the space that this approach uses. In the higher levels we use $O(h)$ space (recall that each level needs a constant amount of information by Lemma~\ref{lem_spaceologn}). For the lower levels, we use $O(p\log_p (n/2^h))=O(p(\log (n)-h)/\log p)$ in total (in here each level needs $O(p)$ space, see Lemma~\ref{lem_space}). 

Since we fixed the value of $p$ so that $\log p =\frac{\log n-h}{\log h+\log\log n} = \frac{\log n -h}{(1+a)\log \log n}$, we have $p=2^{\frac{\log n -h}{(1+a)\log \log n}}$. Moreover, we also fixed $h=\log^an \in o(\log n)$. Thus, we conclude that the space requirements of the lower part of the algorithm dominate, and is given by:

$$p(\log (n)-h)/\log p  = 
2^{\frac{\log n -h}{(1+a)\log \log n}}(1+a)\log \log n =
O(2^{\frac{\log n}{(1+a)\log \log n}}\log \log n) = O(n^{\frac{1}{(1+a)\log \log n}}\log \log n) $$
\end{proof}

We note that the space requirements of this approach are less than $n^\varepsilon$ (since $1/\log\log n$ is asymptotically smaller than any small fixed constant). Whenever $n^\varepsilon$ space is available, one should use Theorem~\ref{theo_gen} instead to obtain linear running time.

\section{Compressed stack for $k > 1$}\label{seclargek}
In the previous sections we assumed that $\alg$ only accesses the top element of the stack in one operation. In Section \ref{sec_applis} we introduce several algorithms that must look at the top two elements of the stack. One can also imagine algorithms that look even further down in the stack, hence in this section we generalize the compressed stack so that the top $k$ elements of the stack are always accessible (for some positive constant $k$). 

A simple way to allow access to the top $k$ elements of the stack, without modifying the workings of the compressed stack, is to keep the top $k-1$ elements in a separate {\em mini-stack}, while only storing the elements located further down in the compressed stack. 
%
We consider the mini-stack as part of the context, hence it will be stored every time the context is stored.
Note that since $k$ is a constant, the mini-stack has total constant size.

\textbf{Stack operations.}
Whenever the original stack contains $k-1$ or fewer elements, all of these elements will be kept explicitly in the mini-stack, and the compressed stack will be empty. Push and pop operations in which the stack size remains below $k$ will be handled in the mini-stack, and will not affect the compressed stack. 

If the mini-stack already has $k-1$ elements and a new element $a$ must be pushed, we push $a$ to the mini-stack, remove its {\em bottommost} element $a'$ from the mini-stack, and push $a'$ into the compressed stack. Similarly, whenever we must pop an element, we remove the top element of the mini-stack. In addition, we also pop one element from the compressed stack and add this element as the bottommost element of the mini-stack. 

The \textsc{Reconstruct} operation is mostly unaffected by this modification. The only change is that whenever we reconstruct a block, we must also recreate the situation of the mini-stack right after the bottommost element was pushed. Recall that this information is stored as part of the context, hence it can be safely retrieved. 

All of the handling operations of the mini-stack need constant time, and never create more than one operation in the compressed stack. Thus, we summarize the idea of this Section with the following result:
\begin{lemma} 
By combining a compressed stack with a mini-stack, the top $k$ elements of the stack can be accessed and stored, without affecting the time and space complexity of the compressed stack technique. 
\end{lemma}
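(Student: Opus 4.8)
The plan is to verify the four assertions bundled in the statement—accessibility of the top $k$ elements, their storability, and the preservation of both the time and the space complexity—by checking that the mini-stack operations maintain a single structural invariant. The invariant I would carry is: \emph{at every instant the mini-stack holds exactly the topmost $\min(k-1,\ell)$ elements of the conceptual stack, in the same relative order, where $\ell$ is the current stack size; and whenever $\ell\geq k$, the topmost element of the compressed stack is precisely the $k$-th element from the top.} Granting this invariant, accessibility is immediate: a query \textsc{top}$(i)$ with $i\leq k-1$ is answered from the mini-stack, while \textsc{top}$(k)$ reads the top of the underlying compressed stack, which is always available through the $k=1$ machinery of Sections~\ref{sec_lin} and~\ref{sec_gen}; and if $\ell<k$ the whole stack resides in the mini-stack, so the remaining queries correctly return $\emptyset$.

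First I would establish the invariant by induction over the sequence of push and pop operations, following exactly the case distinction in the paragraph on stack operations. For a push while the mini-stack has fewer than $k-1$ elements, nothing reaches the compressed stack and $a$ simply becomes the new top of the mini-stack, so the invariant holds trivially. For a push while the mini-stack is full, pushing $a$ and then demoting the former bottommost element $a'$ into the compressed stack restores a mini-stack of exactly $k-1$ elements (the new top $k-1$), while $a'$—now the $k$-th element from the top—lands on top of the compressed stack, which is what the invariant demands. The two pop cases are symmetric: removing the top of the mini-stack and, when $\ell\geq k$, promoting the top of the compressed stack to the bottom of the mini-stack re-establishes both the count and the ordering.

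Next I would argue the complexity bounds. Since $k$ is a fixed constant, the mini-stack has $O(1)$ size, so every mini-stack manipulation runs in $O(1)$ time, and by the case analysis above each push or pop triggers \emph{at most one} operation on the underlying compressed stack. Hence the number of compressed-stack operations is unchanged up to a constant factor, and the running times of Theorems~\ref{thm_comp1},~\ref{theo_gen},~\ref{theo_timelogn}, and~\ref{theo_hybrid} are preserved. For space, the mini-stack is absorbed into the context: each context already occupies $O(1)$ space, and adding $O(k)=O(1)$ further words leaves this asymptotically unchanged, so Lemmas~\ref{lem_space} and~\ref{lem_spaceologn} continue to hold verbatim.

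The step I expect to be the main obstacle—indeed the only point beyond bookkeeping—is the correctness of \textsc{Reconstruct} in the presence of a mini-stack, since the conditions tested in Algorithm~\ref{alg:scheme} now depend on the top $k$ elements rather than on the top element alone. The arguments of Lemmas~\ref{lem_reconst} and~\ref{lem_reconstgen} hinged on every local condition being determined by the stored context together with the accessible top of the stack; to recover this I would rely on the mini-stack being stored as part of the context whenever a block is created, and on \textsc{Reconstruct} restoring the mini-stack to its recorded state before re-executing \alg. With the mini-stack correctly initialized, the invariant above guarantees that \textsc{top}$(1),\dots,\textsc{top}(k)$ return the same values during reconstruction as in the original run, so the deterministic sequence of pops and pushes is reproduced identically and the reconstructed block coincides with the original, exactly as in the $k=1$ analysis.
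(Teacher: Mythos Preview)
Your proposal is correct and follows the same approach as the paper: the paper's justification is essentially the single sentence preceding the lemma (``All of the handling operations of the mini-stack need constant time, and never create more than one operation in the compressed stack''), together with the earlier remark that the mini-stack is stored as part of the context for \textsc{Reconstruct}. You have simply written out this argument in full, making explicit the invariant on the mini-stack contents and the induction over operations that the paper leaves implicit.
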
 

\section{Applications}\label{sec_applis} 

In this section we show how our technique can be applied to several well-known geometric problems. For each problem we present an existing algorithm that is a (green) stack algorithm, where our technique can be applied to produce a space-time trade-off.

\subsection{Convex hull of a simple polygon}\label{sec_hull}

Computing convex hulls is a fundamental problem in computational geometry, used as an intermediate step to solve many other geometric problems. 
For the particular case of a simple polygon $\Poly$ (Fig.~\ref{fig:applications}(a)), there exist several algorithms in the literature  that compute the convex hull of $\Poly$ in linear time (see the survey by Aloupis~\cite{a-hltchasp}). 

Among others, we highlight the method of Lee~\cite{l-ofchsp-83}; this algorithm walks along the boundary of the given polygon (say, in counterclockwise order).  At each step, it looks  at the top two elements of the stack when determining if the current vertex makes a left turn, and thus must be pushed into the stack. It is straightforward to verify that indeed, this algorithm is a stack algorithm (with $k=2$). For ease of exposition we assume that the vertices of $\Poly$ are in general position: no three vertices are colinear. Before showing that it is also green, we first introduce a technical operation:

\begin{lemma}\label{lem_bridge}
Let $\mathcal{C}=(p_1,\ldots, p_n)$ be a simple polygonal chain with its vertices in general position such that $p_1$ and $p_n$ are vertices of $CH(\mathcal{C})$. Let $r$ be a ray emanating from a point $q\in p_1p_n$ that crosses $\mathcal{C}$. Then we can find an edge of $CH(\mathcal{C})$ that intersects the ray $r$ in $O(n)$ time using $O(1)$ variables.
\end{lemma}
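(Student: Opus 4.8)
The plan is to exploit that $q$ lies inside (or on the boundary of) $CH(\mathcal{C})$: since $p_1$ and $p_n$ are hull vertices and $q\in p_1p_n$, the point $q$ belongs to the hull, and the ray $r$, which crosses $\mathcal{C}$ and hence leaves the hull, meets the boundary of $CH(\mathcal{C})$ in a unique exit edge. My goal is to compute the two chain vertices $a,b$ that are the endpoints of this edge. First I would run an orientation test on $p_1$, $p_n$ and the direction of $r$ to decide on which side of the chord $p_1p_n$ the ray exits (if $r$ runs along $p_1p_n$ itself, the answer is immediate). This restricts the search to the convex subchain $\Gamma$ of $CH(\mathcal{C})$ joining $p_1$ and $p_n$ on that side, and guarantees that the sought edge is an edge of $\Gamma$.

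To pin down $a$ and $b$ without storing the hull, I would view the hull boundary from $q$: because $CH(\mathcal{C})$ is convex and $q$ is interior, every direction out of $q$ stabs exactly one boundary edge, and $a,b$ are precisely the two consecutive (in angular order around $q$) hull vertices between which the direction of $r$ falls. I would locate each of them with a single gift-wrapping-style linear scan~\cite{j-ich-73}: scanning the vertices of $\mathcal{C}$ once while maintaining a constant number of variables, I keep a current best candidate and update it by an orientation test, so that at the end the scan returns the hull vertex adjacent to the exit on the clockwise side ($b$) and, by a second symmetric scan, the one on the counterclockwise side ($a$). Each scan touches every vertex once and uses $O(1)$ variables, so the whole procedure runs in $O(n)$ time and $O(1)$ space. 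Having $a$ and $b$, I would finally intersect segment $ab$ with $r$ to confirm the crossing and report $ab$.

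The delicate point --- and the step I expect to be the main obstacle --- is arguing that a constant-memory linear scan correctly extracts each straddling hull vertex while discarding interior vertices that happen to lie at a similar angle from $q$. The key is convexity together with the anchoring vertices $p_1,p_n$: any chain vertex lying angularly between the direction of $r$ and the true endpoint must sit on the near side of the supporting line through $ab$, so the orientation test used to update the candidate rejects it; I would make this precise by exhibiting a scan invariant showing that the maintained candidate always coincides with the extreme vertex in the appropriate rotational sense. I would also dispose of the degenerate configurations separately --- namely when $r$ passes through a vertex of $CH(\mathcal{C})$, when $q$ coincides with $p_1$ or $p_n$, and when $r$ is collinear with $p_1p_n$ --- each of which is handled in $O(1)$ additional time once general position of the vertices is assumed.
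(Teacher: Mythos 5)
Your setup is sound: $q$ indeed lies in $CH(\mathcal{C})$, the ray exits through a unique hull edge (in general position), and determining the exit side of $p_1p_n$ is an $O(1)$ orientation test. The gap is in the step you yourself flag as delicate, and it is not a technicality --- it is the entire difficulty of the lemma. Your plan finds each endpoint of the exit edge by a single ``gift-wrapping-style'' scan that keeps one candidate and updates it with an orientation test. But the gift-wrapping update rule (replace candidate $c$ by $p_i$ if $p_i$ lies outside the line through the anchor and $c$) is only sound when the anchor is a point admitting a supporting line of the hull, i.e.\ a hull vertex or an exterior point. Your anchor is $q$, an \emph{interior} point: every line through $q$ has chain vertices strictly on both sides, so there is no ``extreme vertex in the appropriate rotational sense'' around $q$, and the angularly closest vertex to $r$ on either side is typically a vertex interior to $CH(\mathcal{C})$. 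Your proposed rejection test is circular: you would discard such interior vertices by an orientation test against the supporting line through $ab$, but $a$ and $b$ are precisely the unknowns. More fundamentally, the exit edge is characterized as the pair $(u,v)$, with $u,v$ on opposite sides of the ray, whose segment crosses $r$ \emph{farthest} from $q$; this is an optimization over pairs, and a one-pass, $O(1)$-memory ``best candidate'' scan can permanently discard a vertex whose optimal partner only appears later in the chain, so neither endpoint can be computed independently of the other by such a scan. (Anchoring gift wrapping at the known hull vertex $p_1$ and marching to the exit edge is correct but costs $O(n)$ per hull vertex, i.e.\ $O(nh)$, not $O(n)$.)

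The paper's proof (following Pilz) resolves exactly this obstacle with a different mechanism: it maintains a candidate \emph{edge} $p_up_v$ crossing $r$ (initialized as the chain edge crossing $r$ farthest from $q$, found in one pass), tests whether a vertex $p_k$ \emph{dominates} the candidate by computing the constant-size hull $CH(p_1,p_n,p_u,p_v,p_k)$, and searches for a dominating vertex by a \emph{tandem walk} outward along the two subchains $(p_1,\ldots,p_u)$ and $(p_v,\ldots,p_n)$. When a dominating vertex is found, the candidate is replaced by an edge of that five-point hull incident to $p_k$ and still crossing $r$, and the search restarts; the tandem walk guarantees the work since the last restart is proportional to the portion of the chain that the update discards forever, which yields $O(n)$ total time with $O(1)$ variables. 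Some amortized, multi-pass device of this kind (candidate edge, constant-size domination test, charging to discarded chain) is what your proof is missing; without it the claimed single-scan correctness invariant does not exist.
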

\begin{proof}
This result was shown by Pilz~\cite{p-pc-13}. In the following we give a proof for completeness. Without loss of generality, we assume that segment $p_1p_n$ is horizontal, and that the angle that ray $r$ forms with the positive $x$-axis is in $(0,\pi)$. Our algorithm will search among candidate edges that cross the given ray $r$, starting from an edge that may not be in $CH( \mathcal{C})$, and walking on $\cal C$ in opposite directions, in order to find an edge that is guaranteed to be in the convex hull. Given a candidate edge $p_u p_v$, we say that a point $p_k \in \cal C$ \emph{dominates} $p_u p_v$ if $p_u p_v$ is not part of $CH(p_1, p_n, p_u, p_v, p_k)$ (i.e. $p_u p_v$ is not an edge of the convex hull of those five points). Note that  $p_u p_v \in CH(\mathcal{C})$ if and only if  $p_u p_v$ is not dominated by any other vertex of $\cal C$. 



The first candidate edge is the segment $e_0$ of $\mathcal{C}$ that crosses $r$ furthest away from $q$. Naturally, $e_0$ can be found in linear time by walking once from $p_1$ to $p_n$. In general, let $p_u p_v$ be the current candidate edge (for some $u<v$).
We keep the following invariant: $p_u p_v$ crosses $r$ and no vertex of $\cal C$ between $p_u$ and $p_v$ can dominate $p_u p_v$. 
Thus any vertex dominating $p_u p_v$ must be in one of the chains $\mathcal{C}_u=(p_1,\ldots, p_u)$ and $\mathcal{C}_v=(p_v,\ldots, p_n)$. 
The algorithm searches for a dominating vertex by performing a tandem walk (i.e., first it checks vertex $p_{u-1}$ in $C_u$, then $p_{v+1}$ in $C_v$, $p_{u-2}$ in $C_u$, and so on). 
The search finishes when a first dominating vertex $p_k$ is found, or when all vertices in both chains have been checked.


\begin{figure}[tb]
\centering
\includegraphics[width=.5\textwidth]{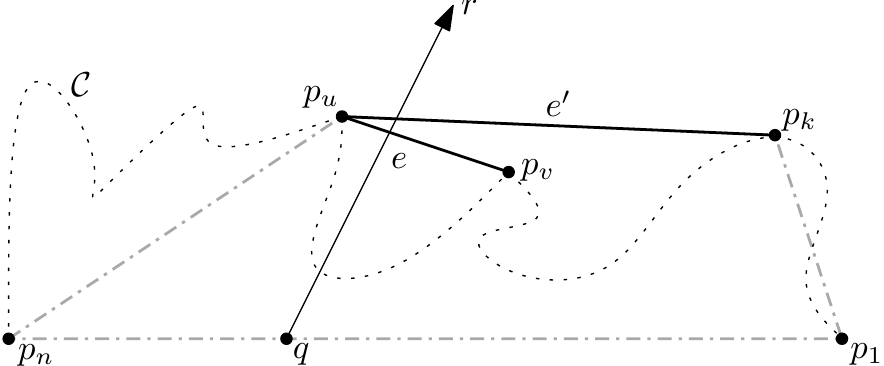}
\caption {
If $p_k$ dominates edge $p_u p_v$, then there must be an edge of $CH(p_1, p_n, p_u, p_v, p_k)$---drawn with a dash-dot pattern---that crosses $r$ and is incident to $p_k$. In the figure, that edge is $e'=p_u p_k$.
}
\label{fig_bridge}
\end{figure}

Checking a point $p_k$ of $\mathcal{C}_v$ or $\mathcal{C}_u$ consists in computing $CH(p_1, p_n, p_u, p_v, p_k)$, and determining whether $p_u p_v$ is part of it.
If so, then $p_u p_v$ remains as candidate edge. Otherwise there must be an edge in $CH(p_1, p_n, p_u, p_v, p_k)$ that has $p_k$ as endpoint and crosses $r$;
that edge becomes the new candidate edge, see Figure~\ref{fig_bridge}. 
 The algorithm finishes when both chains are empty, and thus the candidate edge $p_u p_v$ belongs to the convex hull, since no other point in $	\cal C$ can dominate it. 

Regarding the running time, we restart the search every time a new candidate has been found, and the cost of the search can be charged to the portion of the chain that has been discarded. Since we walk on both chains in parallel, the amount of processed vertices will be proportional to the size of the discarded chain. The space needed by the algorithm is $O(1)$, thus the lemma follows.
\end{proof}
 
The above operation allows us to find a convex hull edge in a specified direction in time proportional to the number of vertices in the chain. Since an edge gives us two elements that will be consecutive in the stack (i.e, the two endpoints), this operation can also be used to find the predecessor (or successor) of an element in the stack (by querying with a ray passing arbitrarily close to the element). In Lee's algorithm, the context of a point in the stack consists in its predecessor in the stack, which can be found in linear time using Lemma~\ref{lem_bridge} as well. 

We now use this operation to prove that indeed Lee's algorithm is green.

\begin{lemma}\label{lem_leefully}
Lee's algorithm for computing the convex hull of a simple polygon~\cite{l-ofchsp-83} is green. Moreover, the expected running time of \lp$(a_c,a_t,a_b,a_q)$ is $O(m\log m)$ time, and will never use more than $O(1)$ variables, where $m=t-b$.
\end{lemma}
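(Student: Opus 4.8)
The plan is to verify that Lee's algorithm meets the only requirement for being green, namely the existence of an efficient \lp\ operation, and to realize that operation using the bridge-finding routine of Lemma~\ref{lem_bridge} as a black-box oracle. First I would fix the geometric picture. Restricted to the interval $[a_b,a_t]$, the stack is exactly the convex-hull chain of the polygonal subchain $\mathcal{C}=(a_b,\ldots,a_t)$: by the characterisation of Lee's stack as consecutive hull vertices making left turns, the elements of the stack inside $[a_b,a_t]$ are precisely the vertices of $CH(\mathcal{C})$. Since $a_b$ and $a_t$ are given to be in the stack just before $a_c$ is processed, they are vertices of $CH(\mathcal{C})$, so the hypothesis of Lemma~\ref{lem_bridge} is met with base segment $a_b a_t$. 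Processing $a_c$ cuts this chain along the tangent (bridge) from $a_c$ to $CH(\mathcal{C})$: the vertices that survive in the stack are exactly those of index at most the tangent point $a_j$, and survival is therefore \emph{monotone} in the input index. Consequently $\unb{q}$ and $\lnb{q}$ are completely determined by (i) the position of the tangent point $a_j$ and (ii) the hull vertices immediately above and below $a_q$ along $CH(\mathcal{C})$.

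Next I would reduce each of these sub-queries to a ``hull edge met by a prescribed ray'' query, which is exactly what Lemma~\ref{lem_bridge} answers in $O(m)$ time and $O(1)$ space: choosing a point on $a_b a_t$ and a ray $r$ crossing $\mathcal{C}$ returns the hull edge met by $r$, hence a pair of consecutive hull vertices. By directing the ray appropriately I can locate the bridge from $a_c$ and the two hull neighbours of $a_q$. The \lp\ specification also asks for the contexts of the returned elements; in Lee's algorithm the context of a stack vertex is simply its predecessor in the stack, which is one further bridge query (as already noted just before this statement), again $O(m)$ time and $O(1)$ space.

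Because survival is monotone in the index and green algorithms are permitted random access to $\mathcal{I}$, I would drive the whole computation by a binary search over the chain indices in $[b,t]$, spending one $O(m)$ bridge query per probe to decide on which side of the probe the tangent point (respectively $a_q$) lies. This gives $O(\log m)$ probes of cost $O(m)$ each, for $O(m\log m)$ total time, while the working storage is the $O(1)$ of Lemma~\ref{lem_bridge} plus the $O(1)$ state of the binary search. Whenever a neighbour would fall outside $[a_b,a_t]$ --- for instance when $a_c$ empties the whole interval --- the corresponding ray query reports no crossing hull edge and we return $\emptyset$, matching the contract of \lp. I note that the resulting procedure is in fact deterministic, so the claimed \emph{expected} $O(m\log m)$ bound follows a fortiori.

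The hard part will be establishing correctness of the search, i.e.\ that the decision oracle is genuinely monotone: I must prove that as the probe index increases the predicate ``the tangent point lies at or above the probe'' flips exactly once, and symmetrically that the hull edge returned by Lemma~\ref{lem_bridge} for a rotating ray varies monotonically; both rest on the convexity of $CH(\mathcal{C})$ together with the simplicity of $\Poly$, which guarantees $a_b,a_t\in CH(\mathcal{C})$. I also have to pin down the exact correspondence between ``vertex surviving in the stack after $a_c$'' and ``hull vertex of $\mathcal{C}$ not cut by the $a_c$-bridge'', which I would derive from the invariant property of the stack and the left-turn test used by \alg. Degenerate ties are ruled out by the general-position assumption on the vertices of $\Poly$, so these orientation-based comparisons are unambiguous.
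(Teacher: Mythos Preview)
Your high-level plan---use Lemma~\ref{lem_bridge} as an oracle inside a search---is the same as the paper's, but the search you propose is not the one the paper carries out, and the difference is exactly where the word \emph{expected} in the statement comes from. The paper runs a \emph{randomized prune-and-search over directions}: it keeps an angular wedge $\mathcal{R}_{b,t}$ containing the relevant hull vertices, picks one of those vertices uniformly at random, shoots a ray from the midpoint of $a_t a'_b$ through it, recovers via Lemma~\ref{lem_bridge} the hull edge $a_u a_v$ crossed by that ray, and either returns it (if $u\le q<v$) or discards one side of the chain. A uniformly random vertex has angular rank in $[1/3,2/3]$ with probability $1/3$, so a constant fraction of the candidates is eliminated per round in expectation, giving $O(\log m)$ expected rounds. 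The paper explicitly notes that the obvious derandomization---selecting the (approximate) median angle---cannot be done in $O(1)$ space, which is precisely why the bound is only in expectation. The paper's procedure also never computes the tangent from $a_c$; it simply returns the hull edge of the chain whose endpoint indices straddle $q$, relying on the stack invariant for correctness.

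Your proposal has a gap at exactly this point. Lemma~\ref{lem_bridge} is a \emph{direction}-based oracle: it takes a ray from a point on the base segment and returns the hull edge that ray crosses. It does not answer index-based questions. When you probe an index $i\in[b,t]$ and ask ``does the tangent point lie at or above $i$?'', a single bridge query does not decide this: shooting toward the vertex $a_i$ returns a hull edge whose endpoint indices have no monotone relation to $i$, because the chain can wiggle arbitrarily beneath its hull, and $a_i$ need not even be a hull vertex. You flag this as ``the hard part'' but do not supply the oracle, and I do not see one that fits in $O(1)$ workspace---this is the very obstruction that forces the paper to randomize. Consequently your claim that the resulting procedure is deterministic (so that the expected bound holds a fortiori) is unsupported; the randomization is load-bearing, not incidental.
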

\begin{proof}
The main invariant of Lee's algorithm is that at any instant of time the vertices that are in the stack are those that belong to the convex hull of the scanned portion of the input. Since $k=2$, and we are given the context of both $a_t$ and $a_b$, we can check the elements in the stack below those two elements.  
Let $a_t'$ and $a_b'$ be the predecessors of $a_t$ and $a_b$, respectively. Observe that the vertices in the stack between $a_t$ and $a_b$ (if any) must form segments whose slopes are between those of $e_t=a_ta'_t$ and $e_b=a^{ }_{b}a_{b}'$. 
Our aim is to find two indices $u^*,v^*$ such that $u^*\leq q < v^*$, and that the segment $a_{u^*}a_{v^*}$ belongs to the convex hull of the given chain. By the invariant of Lee's algorithm, these two elements must be consecutive in the stack, and thus form the neighborhood of $a_q$. 

In order to find these two vertices, we virtually rotate the problem instance so that the segment $a_ta'_b$ is horizontal, and edges $e_t$ and $e_b$ are above $a_ta'_b$. Let $\mathcal{R}_{b,t}$ be the (possibly unbounded) region that is below the line through $e_t$, above the line through $e_b$, and above the line through $a_ta'_b$ (gray region in Figure~\ref{fig_convexhull}). Also, let $m$ be the midpoint in the segment $a_ta'_b$. We start by scanning the input and counting how many vertices between $a_t$ and $a_b$ lie in $\mathcal{R}_{b,t}$. If only a constant number of vertices belong to this region the problem can be solved easily: explicitly store these points, compute the convex hull explicitly, and look for the indices $u$ and $v$. Since only vertices in this region can contribute to the edge we look for, the obtained solution will be correct.

However, in most cases, $\mathcal{R}_{b,t}$ will contain many points of the input. In this case, we query with a ray emanating from $m$ towards a randomly selected point in $\mathcal{R}_{b,t}$. By Lemma~\ref{lem_bridge}, we obtain a convex hull edge that intersects the ray. Let $a_u$ and $a_v$ be the endpoints of the obtained edge. If $u\leq q < v$ then we are done (since we have found the answer to our query). Otherwise, we discard all the portion of the chain between $a_t$ and $a_v$ or between $a_u$ and $a_b$, and continue the search with the new chain endpoints. We repeat this process until either we have found $\hat{a}_q$ and $\check{a}_q$, or we have two endpoints whose associated region contains a constant number of points (and the neighbors of $a_q$ can also be found as described above). 

\begin{figure}[tb]
\centering
\includegraphics[width=.5\textwidth]{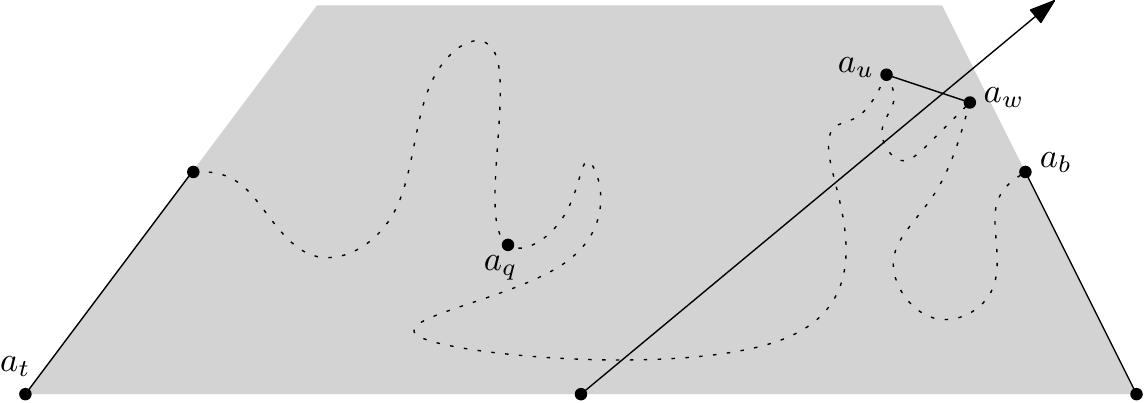}
\caption {Implementation of \lp\ operation for the convex hull. If $\mathcal{R}_{b,t}$ (gray in the figure) contains few vertices, we can solve the problem with any brute force algorithmm. Otherwise, we query a ray to obtain a convex hull edge, and obtain a subproblem with fewer points.}
\label{fig_convexhull}
\end{figure}

Thus, in order to complete the proof, we must show that show that we do not execute too many queries. Recall that the direction was determined by a point in $\mathcal{R}_{b,t}$ selected at random. Consider first the case in which we sort the points of the input radially when viewed from $m$ and pick the one whose angle forms the median angle. In this case, the number of points in region $\mathcal{R}_{b,t}$ would halve each step. The same holds, even if instead of a median we use an approximation (say, a point whose rank is between $1/3$ and $2/3$): at each step, a constant fraction of the points will be discarded.

Unfortunately, computing the median (or even an approximation) in $O(1)$-workspaces is too expensive. Thus, we pick a random direction instead. The probability that the rank of a randomly selected point is between $n/3$ and $2n/3$ is $1/3$ (i.e., a constant). If each of the choices is done independently at random, we are effectively doing Bernoulli trials whose probability is a constant. Thus, after a constant number of steps we will have selected an element whose angular rank is between $1/3$ and $2/3$ with high probability, and a constant fraction of the input will be pruned. In particular, we conclude that after $O(\log n)$ steps only a constant number of elements will remain. Since each step takes linear time, the bound holds.
\end{proof}


\begin{theorem}\label{the_hull}
The convex hull of a simple polygon can be reported in $O(n\log_p n)$ time using $O(p\log_p n)$ additional variables (for any $2\leq p\leq n$), $O(n\log^{1+a} n)$ time using $O(n^{\frac{1}{(1+a)\log\log n}}\log\log n)$ additional variables (for any $0<a<1$), or $O(n^2\log n/2^s+n\log^2n)$ time using $O(s)$ additional variables (for any $s\in O(\log n$)).
\end{theorem}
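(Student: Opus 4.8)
The plan is to show that Lee's algorithm~\cite{l-ofchsp-83} satisfies every hypothesis needed to invoke the three transformation results of the paper, and then to instantiate each of them. Concretely, the paragraph preceding Lemma~\ref{lem_bridge} already argues that Lee's method scans the polygon boundary incrementally, pushes and pops according to a left-turn test on the top two stack elements, and uses only $O(1)$ auxiliary variables; hence it is a stack algorithm with $k=2$. The access to the top two elements is handled transparently by the mini-stack of Section~\ref{seclargek} at no asymptotic cost, and Lemma~\ref{lem_leefully} establishes that the algorithm is green, with a \lp\ operation running in $O(m\log m)$ expected time and $O(1)$ space. With these two facts in hand, each of the three claimed bounds should follow by plugging Lee's algorithm into the corresponding generic theorem.

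First I would dispatch the two easy bounds. The second bound, $O(n\log^{1+a}n)$ time with $O(n^{1/((1+a)\log\log n)}\log\log n)$ variables, is exactly Theorem~\ref{theo_hybrid} applied to a green algorithm, so it transfers verbatim. Likewise the third bound, $O(n^2\log n/2^s+n\log^2 n)$ time with $O(s)$ variables for $s\in O(\log n)$, is precisely Theorem~\ref{theo_timelogn}. In both cases the only thing to check is that Lee's algorithm is green and that its \lp\ cost matches the $O(m\log m)$ assumption under which those theorems were proved, both of which are guaranteed by Lemma~\ref{lem_leefully}.

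The substantive part is the first bound, $O(n\log_p n)$ time with $O(p\log_p n)$ variables, since the generic Theorem~\ref{theo_gen} only promises $O(n^{1+1/\log p})$ for an arbitrary stack algorithm, which is dramatically worse for small $p$ (for instance $n^2$ versus $n\log n$ at $p=2$). Here I would run the $p$-ary compressed stack of Section~\ref{sec_gen} but replace the recursive \textsc{Reconstruct} of Lemma~\ref{lem_reconstgen} (whose self-recursion is exactly what produces the $2pT(n/p)$ recurrence and the $n^{1/\log p}$ blow-up) by direct reconstruction through the bridge operation. The remark following Lemma~\ref{lem_bridge} supplies the crucial tool: a single stack neighbour, and hence a sub-block boundary of the reconstructed hull, can be obtained by one ray query in $O(m)$ time and $O(1)$ space via Lemma~\ref{lem_bridge}, without the extra logarithmic factor of the constrained \lp. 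The space bound is inherited unchanged from Lemma~\ref{lem_space}, namely $O(p\log_p n)$.

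The main obstacle, and where the analysis must be carried out carefully, is bounding the total reconstruction time by $O(n\log_p n)$ rather than something larger. The charging scheme of Theorem~\ref{thm_comp1} still applies, since a block of a given level is destroyed at most once and its reconstruction is charged to that destruction, so no block is charged twice; this caps the reconstructions of a fixed level by the total input size. The delicate point is that recomputing the sub-block boundaries at a fixed level must total $O(n)$, not $O(n)$ per boundary: this is exactly the efficiency of the bridge search of Lemma~\ref{lem_bridge}, whose cost is charged to the portion of the chain it discards, so that the bridges separating consecutive sub-blocks can be found in linear total time across a level (as in the tangent computation of Chan and Chen~\cite{cc-mpga-07}). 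Granting this, each of the $\log_p n$ levels contributes $O(n)$ reconstruction time, the $O(n\log_p n)$ push cost dominates nothing extra, and the first bound follows.
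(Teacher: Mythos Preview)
The paper states Theorem~\ref{the_hull} with no proof block; its implicit argument is simply ``Lee's algorithm is a stack algorithm (Section~\ref{sec_hull}), it is green (Lemma~\ref{lem_leefully}), now apply the generic transformation theorems.'' Your treatment of the second and third bounds is therefore exactly what the paper intends: Theorems~\ref{theo_hybrid} and~\ref{theo_timelogn} transfer verbatim once Lemma~\ref{lem_leefully} is in hand.

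For the first bound you have spotted a genuine discrepancy in the paper itself. Theorem~\ref{theo_gen} only yields $O(n^{1+1/\log p})$, and nothing else in the paper upgrades this to $O(n\log_p n)$ for the convex-hull application; the same $O(n\log_p n)$ figure appears in Theorems~\ref{theo_triang}, \ref{theo_pyr}, and~\ref{theo_visi}, and is very likely a leftover from the preliminary version that the authors' own footnote flags as having claimed ``slightly faster running times [that] turned out to be incorrect.'' So the paper does not actually prove this bound, and you are right to look for an additional argument.

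Your proposed fix, however, has a gap at the point you yourself label ``the delicate point.'' Lemma~\ref{lem_bridge} gives one hull edge crossing a \emph{given ray} in $O(m)$ time; its internal amortization (charging to discarded subchains) is within a single query, not across several. To reconstruct a level-$i$ block in partially compressed form you need the $p-1$ hull edges straddling the $p-1$ \emph{index} boundaries between sub-blocks. These are not ray queries, and even if they were, naively issuing $p$ independent queries costs $O(pm)$ per block, hence $O(pn)$ per level and $O(pn\log_p n)$ overall---not $O(n\log_p n)$. Your appeal to Chan--Chen does not close this: their tangent computation exploits that each block's hull is already available (so a tangent between two hulls of sizes $s_1,s_2$ costs $O(s_1+s_2)$), whereas here the sub-blocks' hulls are precisely what you are trying to avoid storing. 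To make the argument go through you would need to show that the $p$ index-based bridge queries on a chain of length $m$ can be answered in $O(m)$ total time and $O(1)$ extra space; this is plausible via a left-to-right sweep that reuses work between consecutive boundaries, but it is a nontrivial lemma that neither Lemma~\ref{lem_bridge} nor the remark following it supplies.
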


\subsection{Triangulation of a monotone polygon}
A simple polygon is called \emph{monotone} with respect to a line $\ell$ if for any line $\ell'$ perpendicular to $\ell$, the intersection of $\ell'$ and the polygon is connected. In our context, the goal is to report the diagonal edges of a triangulation of a given monotone polygon (see Fig.~\ref{fig:applications} (b)). Monotone polygons are a well-studied class of polygons because they are easier to handle than general polygons, can be used to model (polygonal) function graphs, and often can be used as stepping stones to solve problems on simple polygons (after subdividing them into monotone pieces). It is well-known that a monotone polygon can be triangulated in linear time using linear space\ShoLong{~\cite{gjpt-tsp-78}}{~\cite{gjpt-tsp-78,o-cgc-94}}. \ShoLong{}{We note that there also exists a linear-time algorithm that triangulates any simple polygon~\cite{c-tsplt-91} (that is, $\Poly$ need not be monotone), but that algorithm does not follow the scheme of Algorithm \ref{alg:scheme}, hence our approach cannot be directly used. However, our technique can be applied to a well-known algorithm for triangulating a monotone polygon due to Garey {\em et al.}~\cite{gjpt-tsp-78}. For simplicity, we present it below for $x$-monotone polygons. An $x$-monotone polygon is defined by two chains: the top chain and the bottom chain, which connect the leftmost vertex to the rightmost one.} 

\begin{lemma}\label{lem_garfully}
Garey {\em et al.}'s algorithm for triangulating a monotone polygon~\cite{gjpt-tsp-78} is green. Moreover, \lp$(a_c,a_t,a_b,a_q)$ can be made to run in expected $O(m\log m)$ time using $O(1)$ variables, where $m=t-b$.
\end{lemma}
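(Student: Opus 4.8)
The plan is to follow the proof of Lemma~\ref{lem_leefully} almost verbatim, since Garey {\em et al.}'s algorithm shares the structural invariant that made Lee's algorithm green. First I would recall that the algorithm scans the vertices in order of $x$-coordinate and that, at any instant, the stack stores consecutive vertices of either the lower envelope of the upper chain or the upper envelope of the lower chain; in either case these vertices form a convex chain, so two consecutive stack elements are exactly the endpoints of an envelope edge. This is the property that lets us answer \lp\ by a convex-hull edge search: the pair $(\unb{q},\lnb{q})$ is precisely the pair of endpoints of the unique envelope edge straddling index $q$. As $k=2$ here, the top two elements (and thus the context needed to restart the algorithm) are maintained through the mini-stack of Section~\ref{seclargek}.

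Next I would reduce \lp$(a_c,a_t,a_b,a_q)$ to the bridge query of Lemma~\ref{lem_bridge}. Both $a_t$ and $a_b$ lie in the stack just before $a_c$ is processed, hence they are vertices of the current envelope; by the invariant property, every element of the subchain $\mathcal{C}=(a_b,\ldots,a_t)$ that survives the processing of $a_c$ is a vertex of $CH(\mathcal{C})$. Answering the query therefore amounts to finding the two indices $u^*\le q<v^*$ for which $a_{u^*}a_{v^*}$ is the envelope edge of $\mathcal{C}$ that straddles $a_q$, and these two points, being consecutive in the stack, are exactly $\unb{q}$ and $\lnb{q}$. To locate such an edge I would rotate the instance so that the segment $a_b a_t$ is horizontal with the envelope on the correct side, shoot a ray from the midpoint of $a_b a_t$ across $\mathcal{C}$, and invoke Lemma~\ref{lem_bridge}, which returns a hull edge crossing the ray in $O(m)$ time using $O(1)$ variables; a single orientation test then checks whether its endpoints straddle index $q$.

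For the running time I would reuse the randomized pruning of Lemma~\ref{lem_leefully}. If the region of $\mathcal{C}$ that can contribute the straddling edge contains only $O(1)$ vertices, I answer the query by brute force; otherwise I shoot the ray toward a vertex of that region chosen uniformly at random, obtain a hull edge via Lemma~\ref{lem_bridge}, and if it does not straddle $q$ I discard the side of $\mathcal{C}$ not containing $a_q$. A random choice has constant probability of ranking between $m/3$ and $2m/3$, so each successful trial prunes a constant fraction of the remaining vertices; hence after $O(\log m)$ expected iterations only $O(1)$ vertices survive. Each iteration costs $O(m)$ time and $O(1)$ space, for expected $O(m\log m)$ time and $O(1)$ variables overall, which establishes greenness.

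The step I expect to be the main obstacle is arguing that Lemma~\ref{lem_bridge}, which returns an edge of the \emph{full} convex hull of a chain, yields exactly the one-sided envelope edge (lower envelope of the upper chain, or upper envelope of the lower chain) that the triangulation stack actually stores. This requires choosing the ray direction so that the crossed hull edge is guaranteed to lie on the envelope and not on the opposite side of $CH(\mathcal{C})$, together with the observation that the upper and lower chains never interfere because at any instant the stack consists of vertices from a single chain. Once this is in place, the combinatorial reduction and the probabilistic analysis are identical to the convex hull case, so no further work is needed.
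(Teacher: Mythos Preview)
Your proposal is correct and follows essentially the same approach as the paper: both observe that the stack invariant in Garey \emph{et al.}'s algorithm (consecutive vertices of the lower envelope of the upper chain or upper envelope of the lower chain) is the convex-hull invariant in disguise, and therefore the \lp\ procedure from Lemma~\ref{lem_leefully} can be reused verbatim. The paper's own proof is in fact terser than yours---it simply states that ``the \lp\ operation is identical to the one given in Lemma~\ref{lem_leefully}''---so the extra care you take in reducing to Lemma~\ref{lem_bridge} and flagging the one-sided-envelope issue is more detail than the authors themselves provide, not less.
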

\begin{proof}
Our technique can be applied to a well-known algorithm for triangulating a monotone polygon due to Garey {\em et al.}~\cite{gjpt-tsp-78}. For simplicity, we present it here for $x$-monotone polygons. An $x$-monotone polygon is defined by two chains: the top chain and the bottom chain, which connect the leftmost vertex to the rightmost one. The triangulation algorithm of Garey {\em et al.}~\cite{gjpt-tsp-78} consists in walking from left to right on both chains at the same time, drawing diagonals whenever possible, and maintaining a stack with vertices that have been processed but that are still missing some diagonal. 

At any instant of time, the vertices of the stack are a subchain of either the upper chain making left turns, or a subchain of the lower chain making right turns. In particular, at any time the points in the stack are consecutive vertices in either the lower envelope of the upper chain or the upper envelope of the lower chain. As the algorithm proceeds, it processes each vertex in order of $x$-coordinate. In a typical step, when a vertex $v$ is handled, the algorithm draws as many diagonals as possible from $v$ to other vertices in the stack, popping them, and finally adding $v$ to the stack. Note that, unlike in other results of this section, the stack contains the elements that have \emph{not} been triangulated yet. In all other aspects it follows the scheme of Algorithm \ref{alg:scheme}, hence it is a stack algorithm. The invariant of the elements in the stack is almost identical to the one for the convex hull problem (i.e., they form the upper or lower hull of one of the two chains), and thus the \lp\ operation is identical to the one given in Lemma~\ref{lem_leefully}.

%
\end{proof}

\begin{theorem}\label{theo_triang}
A triangulation of a monotone polygon of $n$ vertices can be reported in $O(n\log_p n)$ time using $O(p\log_p n)$ additional variables (for any $2\leq p\leq n$), $O(n\log^{1+a} n)$ time using $O(n^{\frac{1}{(1+a)\log\log n}}\log\log n)$ additional variables (for any $0<a<1$), or $O(n^2\log n/2^s+n\log^2n)$ time using $O(s)$ additional variables (for any $s\in O(\log n$)).
\end{theorem}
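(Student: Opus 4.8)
The plan is to derive Theorem~\ref{theo_triang} as a direct corollary of Lemma~\ref{lem_garfully} together with the generic compressed-stack results, in exactly the same way that Theorem~\ref{the_hull} is obtained for the convex hull. Everything specific to monotone triangulation has already been quarantined inside Lemma~\ref{lem_garfully}: that Garey \emph{et al.}'s algorithm fits Algorithm~\ref{alg:scheme} and is therefore a stack algorithm, that it is green, and that \lp$(a_c,a_t,a_b,a_q)$ runs in expected $O(m\log m)$ time with $O(1)$ variables for $m=t-b$. Once these are granted, no geometric reasoning about triangulations enters the time/space accounting: the compressed stack manipulates $\alg$ as a black box, and the three bounds are assembled from the three generic regimes.

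First I would dispose of the fact that the algorithm reads the top two stack elements, i.e.\ $k=2$. By the mini-stack construction of Section~\ref{seclargek} and its accompanying lemma, keeping the extra $k-1=1$ element in a constant-size mini-stack gives access to both top elements without altering the asymptotic time or space of the compressed stack; since the mini-stack is stored as part of the context, the \textsc{Reconstruct} and \textsc{GreenReconstruct} routines continue to work unchanged. From here on the analysis is word-for-word the $k=1$ case, with ``convex hull of the scanned chain'' replaced by ``lower/upper envelope of one of the two monotone chains,'' a substitution Lemma~\ref{lem_garfully} certifies to be behaviourally identical.

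Next I would instantiate the green machinery in each regime. For $s\in O(\log n)$ I would invoke Theorem~\ref{theo_timelogn} with the expected $O(m\log m)$ cost of \lp, yielding $O(n^2\log n/2^s+n\log^2 n)$ (expected) time in $O(s)$ variables. For the intermediate regime I would apply the hybrid construction of Theorem~\ref{theo_hybrid}, obtaining $O(n\log^{1+a} n)$ time with $O(n^{1/((1+a)\log\log n)}\log\log n)$ variables for any fixed $0<a<1$. For the large-workspace regime I would run the $p$-ary compressed stack of Section~\ref{sec_gen}, but reconstruct blocks with the \textsc{GreenReconstruct}-style procedure of Lemma~\ref{lem_reconstructgreen} instead of the recursive \textsc{Reconstruct}. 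Because green reconstruction is non-recursive, the charging argument of Theorem~\ref{theo_gen} no longer pays the $2pT(n/p)$ recurrence that produced the $O(n^{1+1/\log p})$ factor, and the running time drops to the claimed $O(n\log_p n)$ while the space remains $O(p\log_p n)$ for every $2\le p\le n$.

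The step I expect to demand the most care is precisely this last regime: one must check that feeding the expected $O(m\log m)$ green reconstruction into the level-by-level charging scheme---each block destroyed at most once, its reconstruction cost charged to the destroyed block, summed over the $\log_p n$ levels---telescopes to $O(n\log_p n)$, and that the expectations over the (independent) randomized \lp\ calls combine by linearity without inflating the bound. Since \lp\ is randomized, all three running times hold in expectation, exactly as in Theorem~\ref{the_hull}; the remaining details are a transcription of the proofs of Theorems~\ref{theo_gen}, \ref{theo_hybrid}, and~\ref{theo_timelogn}.
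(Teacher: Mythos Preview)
Your overall plan matches the paper exactly: the paper gives no separate proof of Theorem~\ref{theo_triang}; it is stated immediately after Lemma~\ref{lem_garfully} as the direct analogue of Theorem~\ref{the_hull}, obtained by feeding Garey \emph{et al.}'s green stack algorithm into the generic machinery (Theorems~\ref{theo_gen}, \ref{theo_hybrid}, \ref{theo_timelogn}) together with the $k=2$ mini-stack of Section~\ref{seclargek}. Your handling of the second and third regimes, and of the $k=2$ issue, is precisely what the paper intends.

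Where you go \emph{beyond} the paper is the first regime. You correctly notice that Theorem~\ref{theo_gen} as stated only yields $O(n^{1+1/\log p})$, not the $O(n\log_p n)$ printed in Theorem~\ref{theo_triang} (and in Theorems~\ref{the_hull} and~\ref{theo_visi}). The paper itself offers no argument bridging this gap; indeed, the footnote in the introduction explicitly says the faster bounds claimed in the preliminary version ``turned out to be incorrect,'' and the application theorems appear not to have been updated. Your proposal to run the $p$-ary compressed stack with a \textsc{GreenReconstruct}-style, non-recursive reconstruction is a genuine addition, not a transcription of anything in the paper: the paper only defines \textsc{GreenReconstruct} for $p=2$ (one \lp\ call yields the single sub-block boundary), and the hybrid scheme of Section~\ref{sec_hyb} deliberately keeps $p=2$ in the green levels for exactly this reason. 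For general $p$ you would need to recover $p-1$ sub-block boundaries per reconstruction, and a naive $(p-1)$ invocations of \lp\ on a block of size $m$ costs $O(pm\log m)$; summed over a level this is $O(pn\log n)$, and over $\log_p n$ levels $O(pn\log n\log_p n)$, which does not telescope to $O(n\log_p n)$. So the step you flag as ``demanding the most care'' is not just bookkeeping: as written it does not go through, and the paper does not supply the missing idea either.
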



\vspace{-.1in}
\subsection{The shortest path between two points in a monotone polygon}
Shortest path computation is another fundamental problem in computational geometry with  many variations, especially queries restricted within a bounded region (see~\cite{m-spn-04} for a survey). Given a polygon $\Poly$, and two points $p,q\in\Poly$, their {\em geodesic} is defined as the shortest path that connects $p$ and $q$ among all the paths that stay within $\Poly$ (Fig.~\ref{fig:applications}(c)). It is easy to verify that, whenever $\Poly$ is a simple polygon,  the geodesic always exists and is unique. The length of that path is called the \textit{geodesic distance}. 

Asano {\em et al.}~\cite{amrw-cwagp-10,abbkmrs-mcasp-11} gave an $O(n^2/s)$ algorithm for solving this problem in $O(s)$-workspaces, provided that we allow an $O(n^2)$-time preprocessing. This preprocessing phase essentially consists in repeatedly triangulating $\Poly$, and storing $O(s)$ edges that partition $\Poly$ into $O(s)$ subpieces of size $O(n/s)$ each. 
Theorem~\ref{theo_triang} allows us to remove the preprocessing overhead of Asano {\em et al.} when $\Poly$ is a monotone polygon and there is enough space available.

\begin{theorem}\label{thm_shortest}
Given a monotone polygon $\Poly$ of size $n$ and points $p,q\in \Poly$, we can compute the geodesic that connects them in $O(n^2/s)$-time in an $O(s)$-workspace, for any $s$ such that $2 \log \log n \leq s < n$.
\end{theorem}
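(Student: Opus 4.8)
The plan is to reuse the two-phase algorithm of Asano \emph{et al.}~\cite{amrw-cwagp-10,abbkmrs-mcasp-11} and only replace its expensive subroutine. Recall that their method first \emph{preprocesses} $\Poly$ by (repeatedly) triangulating it in order to store $O(s)$ diagonals that split $\Poly$ into $O(s)$ sub-polygons of size $O(n/s)$, and then \emph{navigates} through these pieces to build the geodesic. The navigation phase already runs in $O(n^2/s)$ time within $O(s)$ space (its cost accounts for triangulating each $O(n/s)$-sized piece as it is entered, and the $O(s)$ pieces sum to $O(n^2/s)$), so it already meets the target. The only bottleneck is the preprocessing, which cost $\Theta(n^2)$ because it relied on a constant-workspace triangulation of a \emph{simple} polygon. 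First I would observe that for a \emph{monotone} polygon the triangulation producing the partition diagonals can be replaced by the compressed-stack triangulation of Theorem~\ref{theo_triang}; selecting the $O(s)$ balanced partition diagonals is bookkeeping that fits in $O(s)$ extra space and can be folded into a single left-to-right triangulation sweep, so the preprocessing cost becomes (asymptotically) the cost of one such triangulation.

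The crux is therefore to show that, for every $s$ in the stated range, Theorem~\ref{theo_triang} triangulates $\Poly$ in $O(n^2/s)$ time using $O(s)$ variables, so the preprocessing never dominates the navigation. I would split the range into three regimes. For $2\log\log n \le s \le \log n$ I would invoke the green bound of Theorem~\ref{theo_triang}, namely $O(n^2\log n/2^s + n\log^2 n)$ time with $O(s)$ space. The second term satisfies $n\log^2 n \le n^2/s$ because $s \le \log n$ forces $s\log^2 n \le \log^3 n \le n$; the first term satisfies $n^2\log n/2^s \le n^2/s$ exactly when $s\log n \le 2^s$, and this inequality is precisely what the hypothesis $s \ge 2\log\log n$ guarantees (at $s=2\log\log n$ one has $2^s=(\log n)^2 \ge 2\log n\log\log n$, and the gap only widens for larger $s$). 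This is where the lower bound $2\log\log n$ in the statement originates. For $\log n < s \le n/\log n$ I would instead use the general bound with $p=2$, giving $O(n\log n) \le O(n^2/s)$ time and $O(\log n) \le O(s)$ space. Finally, for $n/\log n < s < n$ I would use the general bound with $p=n^{1/2}$, giving linear $O(n)$ time and $O(n^{1/2}) \le O(s)$ space (here $s > n/\log n > n^{1/2}$); since $s<n$ we have $O(n) \le O(n^2/s)$.

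Combining the two phases then yields the theorem: for every $s$ with $2\log\log n \le s < n$ both preprocessing and navigation run in $O(n^2/s)$ time and $O(s)$ space, so the geodesic is computed in $O(n^2/s)$ time using $O(s)$ variables throughout the stated range.

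I expect the main obstacle to be the justification—rather than the arithmetic—that the preprocessing of Asano \emph{et al.} genuinely reduces to a single triangulation pass plus $O(s)$-space bookkeeping, i.e.\ that choosing $O(s)$ balanced partition diagonals does not reintroduce a $\Theta(n^2)$ cost and does not require storing the whole triangulation at once. Making this precise means inspecting how their partition is built and checking that counting sub-polygon sizes and emitting the $O(s)$ dividing diagonals can be interleaved with the monotone triangulation's left-to-right sweep. The three-regime time accounting is routine by comparison.
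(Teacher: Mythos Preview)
Your approach is the same as the paper's: replace the triangulation step of Asano \emph{et al.} by the compressed-stack monotone triangulation of Theorem~\ref{theo_triang} and verify that its cost is dominated by the $O(n^2/s)$ navigation phase. The paper's proof is actually terser than yours---it invokes only the green bound $O(n^2\log n/2^s)$ and checks $n^2\log n/2^s\le n^2/s$ once $s\ge 2\log\log n$, leaving the range $s\in\omega(\log n)$ implicit---so your three-regime case split adds detail the paper omits rather than departing from it.

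One point to watch in regime~2: you claim ``the general bound with $p=2$'' gives $O(n\log n)$ time. Theorem~\ref{theo_triang} does state $O(n\log_p n)$, but that is inconsistent with Theorem~\ref{theo_gen}, the abstract, and Table~\ref{table_times}, all of which give $O(n^{1+1/\log p})$---equal to $O(n^2)$ when $p=2$. (The introductory footnote admits that an earlier, faster claim was withdrawn; the application theorems appear not to have been updated.) If the $O(n^{1+1/\log p})$ bound is the operative one, regime~2 fails as written. The fix is painless and preserves your structure: for $\log n<s\le n/\log^2 n$ continue using the green $O(\log n)$-space algorithm, which runs in $O(n\log^2 n)\le n^2/s$; then lower the regime-3 threshold to $s>n/\log^2 n$, where $p=n^{1/2}$ still yields $O(n)$ time and $O(n^{1/2})\le O(s)$ space since $n/\log^2 n>n^{1/2}$ for large $n$.
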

\begin{proof}
Recall that the algorithm of Asano {\em et al.}~\cite{abbkmrs-mcasp-11} has two phases: a triangulation part (that runs in $O(n^2)$ time) and a navigation part (that runs in $O(n^2/s)$ time). 
If we replace the triangulation procedure given by Asano {\em et al.} in \cite{abbkmrs-mcasp-11} by the algorithm given in Theorem~\ref{theo_triang}, the running time of the first phase changes from $O(n^2)$ to $O(n^2 \log n / 2^s)$, leading to an overall running time of $O(n^2 \log n / 2^s + n^2/s)$.
If $s \geq 2 \log \log n$, then $O(n^2 \log n / 2^s + n^2/s)=O(n^2/s)$.\footnote{The value that makes the two terms equal is between $\log \log n + \log \log \log n$ and $\log \log n + \log \log \log n+1$. However, the exact value is not relevant in our context. Thus, for simplicity we upper-bound it by $2 \log \log n$.}


Since the running time is now dominated by the navigation algorithm, we can move the preprocessing part of the algorithm to the query itself. The running time would become $O(n^2/2^s+n^2/s)=O(n^2/s)$. 
\end{proof}


\vspace{-.1in}
\subsection{Optimal 1-dimensional pyramid}
A vector $\phi = (y_1,\ldots,y_n)$ is called \emph{unimodal} if
$y_1 \le y_2 \le \cdots y_k$ and $y_k \ge y_{k+1} \ge \cdots y_n$
for some $1 \le k \le n$.
The $1$-D optimal pyramid problem~\cite{cst-ltaacspc-06} is defined as follows.
Given an $n$-dimensional vector $f = (x_1,\ldots,x_n)$, find  a unimodal vector $\phi = (y_1,\ldots,y_n)$
that minimizes the squared $L_2$-distance $|| f - \phi ||^2
= \sum_{i=1}^n (x_i - y_i)^2$ (Fig.~\ref{fig:applications}(d)). This problem has several applications in the fields of computer vision~\cite{b-uqsqqsrkrumm-03} and data mining~\cite{fmmt-dmotdar-01,mfmt-iedtrrs-97}. Although the linear-time algorithm of Chun {\em et al.}~\cite{cst-ltaacspc-06} does not exactly fit into our scheme, it can be modified so that our approach can be used as well.

\begin{theorem}\label{theo_pyr}
The $1$-D optimal pyramid for an $n$-dimensional vector can be computed in $O(n \log_p n)$ time using $O(p \log_p n)$ additional variables (for any parameter $2 \le p \le n$).
\end{theorem}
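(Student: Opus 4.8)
The plan is to cast the pyramid computation as two coupled runs of the compressed convex hull algorithm of Theorem~\ref{the_hull}, one accumulating a prefix cost and one accumulating a suffix cost, and to extract the optimal peak index on the fly. First I would recall the characterization of Chun {\em et al.}~\cite{cst-ltaacspc-06}: with the peak fixed at position $k$, the optimal unimodal vector is determined by a hull $H_\ell(k)$ of the prefix $x_1,\ldots,x_k$ and a hull $H_r(k)$ of the suffix $x_{k+1},\ldots,x_n$, so it suffices to compute the prefix costs $D_\ell(k)=\sum_{i=1}^k (x_i-y_i)^2$ and the suffix costs $D_r(k)$ and return the index $k$ that minimizes $D_\ell(k)+D_r(k)$. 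The difficulty is purely one of space: the direct algorithm stores all $2n$ partial sums in arrays, whereas we are allowed only $O(p\log_p n)$ variables.

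To dispose of the arrays I would compute $D_\ell$ and $D_r$ incrementally and in lock-step, retaining only the best index seen so far. Since the hull of a monotone (left-to-right sorted) point set is a stack algorithm, I would reuse the compressed-hull machinery but have its push and pop operations update the running cost instead of reporting the hull; advancing $D_\ell(i)$ from $D_\ell(i-1)$ is a plain insertion, so it fits the compressed stack verbatim and by Theorem~\ref{the_hull} stays within the space budget. The suffix cost is the obstacle: passing from $D_r(i-1)$, the cost associated with the hull of $x_i,\ldots,x_n$, to $D_r(i)$, associated with the hull of $x_{i+1},\ldots,x_n$, requires \emph{removing} $x_i$, that is, undoing one step of the hull algorithm, which a pure stack algorithm cannot do.

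The key step is therefore to implement this rollback. I would maintain a second (compressed) stack holding exactly the points that were popped by the current top of the primary stack; the initialization builds $H_r(1)$ together with its associated popped set using Theorem~\ref{the_hull}. To roll back the treatment of a vertex I pop the primary top and push the recorded points back onto the primary stack, recovering their explicit identity through \textsc{Reconstruct} whenever they are only available in compressed form, and then refresh the secondary stack with the points popped by the new top. Both stacks are kept in compressed format, so by Lemma~\ref{lem_space} the total space remains $O(p\log_p n)$.

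For the running time I would argue that each input point enters the stacks only a bounded number of times: once during the left-to-right prefix scan, once when it is recorded in the secondary stack upon being popped, and once more when a rollback reinserts it into the primary stack; after it is popped the second time it is never touched again. Hence every point is pushed and popped $O(1)$ times, each push costs $O(h)=O(\log_p n)$ as in Theorem~\ref{theo_gen}, and the pop and reconstruction cost is amortized by the same charging scheme as before, with each block charged a constant number of times rather than twice, which does not change the asymptotics; the total is $O(n\log_p n)$. I expect the rollback bookkeeping to be the main obstacle: one must show that the secondary stack always holds precisely the points needed to undo one hull step, and that this property survives compression and reconstruction. I would also flag that this is the single application in which the method ceases to be a black box, since \alg\ must know whether it is operating on a compressed stack in order to perform the rollback.
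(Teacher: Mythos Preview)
Your proposal is correct and follows essentially the same approach as the paper: recall the Chun~{\em et al.} characterization, replace the $\Theta(n)$ arrays by an incremental lock-step computation of $D_\ell$ and $D_r$, implement the suffix update via a \textsc{Rollback} that uses a secondary compressed stack holding the points popped by the current primary top, and bound the running time by observing each point is pushed and popped $O(1)$ times so the standard charging scheme goes through with blocks charged a constant number of times. The paper also makes the same remark about the loss of the black-box property; the only minor difference is that the paper's ``three appearances'' all refer to the $D_r$ computation (initialization of $H_r(1)$, entry into the secondary stack, re-entry via rollback) rather than counting the prefix scan among them, but this does not affect the argument.
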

\begin{proof}
Chun {\em et al.}~\cite{cst-ltaacspc-06}  showed that if the location of the peak is fixed in the $k$-th position, the optimal vector
$\phi$ is given by the lower hull $H_\ell(k)$ of $x_1,\ldots,x_k$ and the lower hull $H_r(k)$ of $x_{k+1},\ldots,x_n$. Thus, their approach is to compute and store the sum of $L_2$-distances $D_\ell(k) = \sum_{i=1}^k (x_i - y_i)^2$ and $D_r(k) = \sum_{i=k+1}^n (x_i - y_i)^2$, and returning the index $i$ that minimizes $D_\ell(i)+D_r(i)$. 

The algorithm of Chun {\em et al.} uses an array of size $\Theta(n)$ to store the values $D_\ell(i)$ and $D_r(i)$ for $i\leq n$. Thus, before using our compressed stack technique we must first modify their algorithm so that this extra array is not needed. The idea is to compute values $D_\ell(i)$ and $D_r(i)$ in an incremental fashion, while storing at any instant of time the index $k$ that minimizes $D_\ell(k)+D_r(k)$. 

We scan the points of the input from left to right, processing values one by one. It is straightforward to modify the compressed convex hull algorithm so that, instead of reporting the hull, we compute the values $D_\ell(i)$ or $D_r(i)$ (depending on which points the convex hull was computed). Since we are scanning the points from left to right, we can easily obtain  $D_\ell(i)$ from $D_\ell(i-1)$ without affecting the time or space complexities (i.e., insert the point $(x_i,y_i)$, and update the convex hull). However, the same is not true when computing $D_r(i)$ from $D_r(i-1)$, since we have to {\em rollback} the treatment of a vertex in the convex hull algorithm (i.e., in this case we have to {\em remove} point $(x_i,y_i)$).

To achieve this, we use two stacks. The main one is used by the convex hull algorithm, thus at any instant of time it contains the elements that form the upper envelope $H_\ell(i)$. The secondary one contains all the points that were popped by the top element of the stack (recall that this element is denoted by $\textsc{pop}(1)$). Naturally, we keep both stacks in compressed format, hence the space bounds are asymptotically unaffected. 

Thus, we initialize the primary stack with $H_r(1)$, and the secondary one with the points that were popped by the top of $H_r(1)$. Note that $H_r(1)$ can be computed using Theorem~\ref{the_hull}. Whenever a rollback is needed, we pop the top element of the primary stack, and we push the elements in the secondary stack back into the primary one. Note that, although we do not have all of these points explicitly, we can obtain them invoking procedure \textsc{Reconstruct}. Also notice that, when we apply the reconstruction procedure, we can update the secondary stack with the input values that were popped by the new top of the stack.

We now show that the running time bounds hold. The initialization can be done in the specified time by directly using Theorem \ref{the_hull}. Observe that points may appear three times in the stack: once during the initialization phase, once in the secondary stack (when a vertex would remove them from the hull), and also they re-enter the primary stack a second time during a rollback (if they were pushed and popped in the primary stack at some point). A point $v$ is popped the second time from the primary stack only when we execute \textsc{Rollback}$(v)$. Hence, it will never be accessed again.

That is, a point is pushed (hence, popped) at most three times. As always, each push operation takes $O(h)$ time. Pop operations can also be handled with the same charging scheme as before, taking $O(h+n_a)$ time, where $n_a$ is the size of the destroyed block ($O(n/s^2+n_a)$ for $o(\log n)$-workspaces). Notice that a block can be charged three times, but this does not asymptotically increase the running time. 

Finally, observe that the main stack computes the convex hull, and since we have a green algorithm for this problem we could use the \lp\ operation. However, in order to do so we must also find the same operation for the secondary stack. We were unable to design an efficient \lp\ operation for this stack, thus we leave as an open problem determining if there exists a green stack algorithm for solving this problem.
\end{proof}

{\bf Remark.} Unlike in other applications, this algorithm loses the black box property. That is, \alg\ must know whether or not the stack is being handled in compressed format (and must act differently in each of the cases). This is caused by the need of the \textsc{Rollback} operation that is not needed in all other algorithms.

\subsection{Visibility profile in a simple polygon}
In the visibility profile (or polygon) problem we are given a simple polygon $\Poly$, and a point $q\in \Poly$ from where the visibility profile needs to be computed. A point $p\in \Poly$ is visible (with respect to $q$) if and only if $pq \subset \Poly$, where $pq$ denotes the segment connecting points $p$ and $q$. The set of points visible from $q$ is denoted by $\Vis$ and is called the \emph{visibility profile} (or \emph{visibility polygon}) of $q$ (see Fig. \ref{fig:applications}(e)). 
 Visibility computations arise naturally in many areas, such as computer graphics and geographic information systems, and have been widely studied in computational geometry. 
\ShoLong{}{

We refer the reader to the survey by O'Rourke \cite{r-v-04} and the book by Ghosh~\cite{g-vap-07} for a review of the planar visibility literature in memory-unconstrained models.} Among several linear-time algorithms, we are interested in the method of Joe and Simpson~\cite{js-clvpa-87} since it can be easily shown that it is green.

\begin{lemma}\label{lem_jsfully}
Joe and Simpson's algorithm for computing the visibility profile~\cite{js-clvpa-87} is green.
\end{lemma}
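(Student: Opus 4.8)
The plan is to establish the two properties in the definition of a green algorithm: that Joe and Simpson's method is a stack algorithm in the sense of Algorithm~\ref{alg:scheme}, and that it admits an efficient \lp\ operation. For the first part I would recall that the algorithm scans the vertices of \Poly\ in boundary order (say counterclockwise), and that when it processes a vertex $v$ it performs a bounded number of orientation tests on $q$, $v$, and the neighbors of $v$ to decide which previously-recorded features have become occluded (and must be popped) and whether $v$ is itself visible (and must be pushed). Since the scan is incremental and only $O(1)$ auxiliary variables are used, this matches Algorithm~\ref{alg:scheme} with $k$ equal to a small constant, so the method is a stack algorithm. At this point I would also record the one feature specific to visibility: the objects kept on the stack are not only vertices of \Poly\ but also the {\em shadow points} of reflex vertices (the point where the ray from $q$ through a reflex vertex first re-enters $\partial\Poly$), since the boundary of \Vis\ alternates between original vertices and such shadows. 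This is harmless for the framework, as each such point has an $O(1)$ description determined by its reflex vertex together with the stored context.

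The core of the proof is the design of \lp$(a_c,a_t,a_b,a_q)$. The key observation is that at every instant the features on the stack are exactly the part of $\partial\Poly$ currently visible from $q$, listed in angular order around $q$; hence computing the stack neighbors of $a_q$ amounts to determining which boundary feature is seen from $q$ immediately on each side of the angular direction of $a_q$, restricted to the sub-chain between $a_b$ and $a_t$. To find $\unb{q}$, I would take the next vertex $t'$ of that sub-chain and test whether the sightline from $q$ to $t'$ is unobstructed by the scanned portion; if it is, then $t'$ is the neighbor, and otherwise, using the fact that every visibility discontinuity along the boundary is caused by a reflex vertex (\cite{bkls-cvpufv-11}, Lemma~1), the neighbor is either the reflex vertex angularly closest to $a_q$ that separates it from $t'$ or that vertex's shadow. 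Both candidates are produced by a constant-workspace ray-shooting variant that scans the $m=t-b$ edges of the sub-chain once; $\lnb{q}$ is found symmetrically, and whenever a sought neighbor would fall outside $[a_b,a_t]$ the procedure returns $\emptyset$ as the specification requires. Collecting the contexts along the same scan, \lp\ can be made to run in $O(m)$ time and $O(1)$ space, which comfortably meets the efficiency requirement for greenness.

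The step I expect to be the main obstacle is certifying correctness of the candidate returned by the ray-shooting scan, i.e.\ ruling out that it is an occluded or otherwise spurious feature rather than the genuine stack neighbor. This rests entirely on the reflex-vertex characterization of visibility changes and on a careful constant-workspace implementation of the ray-shooting step, and I would import both from \cite{bkls-cvpufv-11} (Lemmas~1 and~2) rather than re-deriving them. The remaining argument then reduces to checking that a bounded number of such scans, each confined to the interval $[a_b,a_t]$, returns precisely the pair $(\unb{q},\lnb{q})$ together with their contexts, which together with the stack-algorithm property from the first paragraph yields that Joe and Simpson's algorithm is green.
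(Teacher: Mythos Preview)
Your first paragraph, establishing that Joe and Simpson's method is a stack algorithm and noting the role of shadow points, matches the paper's argument essentially verbatim.

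The \lp\ construction, however, diverges from the paper and has a gap. The paper does \emph{not} attempt a single-scan $O(m)$ procedure. Instead it follows the template of Lemma~\ref{lem_leefully}: it gives a linear-time primitive that, for a fixed ray direction, returns the edge of \Vis\ crossed (via the constant-workspace \textsc{RayShooting} of~\cite{bkls-cvpufv-11}), and then wraps this primitive in a \emph{randomized binary search} over directions, yielding $O(m\log m)$ expected time. The reason the paper needs the search is that the query element $a_q$ is arbitrary---in particular, it need not be in the stack---and a single ray toward $a_q$ (or toward its boundary neighbor $t'$) does not in general return a visibility edge whose endpoint \emph{indices} bracket $q$: because the boundary can wind, angular adjacency around the viewpoint and index adjacency along $\partial\Poly$ need not coincide for non-stack elements.

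Concretely, your step ``if $t'=a_{q+1}$ is occluded, the neighbor is the reflex vertex angularly closest to $a_q$ that separates it from $t'$, or its shadow'' is where the argument breaks. The reflex vertex responsible for occluding $a_{q+1}$ may well have index smaller than $q$ (indeed this is the typical case in Joe--Simpson, where earlier reflex vertices hide later boundary), so it cannot be $\hat a_q$; and when $a_q$ itself is invisible, neither Lemma~1 nor Lemma~2 of~\cite{bkls-cvpufv-11} lets you jump in one shot to the first \emph{index-larger} stack element. What those lemmas give you is one visibility edge per linear scan; to locate the pair whose indices straddle $q$, you still need an outer search, which is exactly what the paper's randomized prune-and-search supplies. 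Your proposal would be correct for the restricted task ``find the stack predecessor of a known stack element $a_t$'' (and the paper's running text contains an earlier macro doing precisely that), but the \lp\ specification in Section~\ref{sec_green} requires handling an arbitrary $a_q$, as used in \textsc{GreenReconstruct}.
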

\begin{proof}
This algorithms scans the vertices of the polygon one by one in counterclockwise order. 
When processing a new vertex $v$, a number of cases are considered, which essentially depend on whether $q$, $v$, and its previous or next vertex make a right or left turn. A condition based on these cases determines whether some vertices that up to now were considered visible are not, thus must be popped, and whether $v$ is visible and thus should be pushed. Since vertices are processed in an incremental order, and $O(1)$ additional variables are used, Joe and Simpson's method is a stack algorithm.

We note that not only vertices of $\Poly$ are pushed into the stack: let $v$ be a visible vertex that makes a left or right turn. Consider  the ray emanating from $q$ towards $v$ and let $e$ be the first edge of $\Poly$ that properly intersects with the ray. The intersection point between $e$ and the ray is called the {\em shadow} of $v$ and is the last visible point from $q$ in the direction of $v$. It is easy to see that the visibility region is a polygon whose vertices are vertices that were originally present in $\Poly$ or shadows of reflex vertices. 


We now show how to implement the \lp$(a_c,a_t,a_b,a_q)$ operation. Geometrically speaking, this operation must return two consecutive $a_u, a_v$ vertices of \Vis\ so that $u\leq q <v$. Our aim is analogous to the approach for the 
the following: if $a_q$ is visible, then $\check{a}_q=a_q$, and $\hat{a}$ is the next counterclockwise vertex of \Vis. Otherwise, we must return the edge of \Vis\ that is crossed by ray from $q$ towards $a_q$. Since we know that both $a_t$ and $a_b$ are in the stack in the moment that $a_c$ is processed, we can restrict our search within the polygonal chain from $a_b$ to $a_t$ (since no other vertex of the input can cross the segments $qa_t$ and $qa_b$). Our approach is analogous to the one given in Lemma~\ref{lem_leefully}: we first give an algorithm for finding a visibility edge in a given query direction, and then combine it with a randomized search for the good direction.

For a fixed direction, we can find the first visible edge using the \textsc{RayShooting} operation described in~\cite{bkls-cvpufv-11}: essentially walk along the given polygonal chain and select the segment $e$ that crosses the ray closest to $q$. Since there is no obstruction, the edge $e$ (or at least a portion) must be visible. The last/first visible points of $e$ will be the shadow of the reflex vertex with counterclockwise smallest/largest angle among those that are in the triangle defined by $e$ and $q$, respectively (or an endpoint if no such reflex vertex exists). More details of this operation can be seen in~\cite{bkls-cvpufv-11} (Lemma~2). As with the convex hull, this operation takes linear time. The same reasoning shows that after a constant number of  \textsc{RayShooting} operations, the size of the input will have been reduced by a constant fraction, thus at most $O(\log n)$ queries will be needed.

\end{proof}


\begin{theorem}\label{theo_visi}
The visibility profile of a point $q$ with respect to $\Poly$ can be reported in $O(n\log_p n)$ time using $O(p\log_p n)$ additional variables (for any $2\leq p\leq n$), $O(n\log^{1+a} n)$ time using $O(n^{\frac{1}{(1+a)\log\log n}}\log\log n)$ additional variables (for any $0<a<1$), or $O(n^2\log n/2^s+n\log^2n)$ time using $O(s)$ additional variables (for any $s\in O(\log n$)).
\end{theorem}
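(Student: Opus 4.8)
The plan is to derive Theorem~\ref{theo_visi} as a direct corollary of the general machinery, once the visibility-profile algorithm has been certified to fit the framework. The essential input is Lemma~\ref{lem_jsfully}, which establishes that Joe and Simpson's algorithm is a \emph{green} stack algorithm: it follows the scheme of Algorithm~\ref{alg:scheme}, it consults only a constant number $k$ of topmost stack elements, and it admits a \lp\ operation running in $O(m\log m)$ expected time with $O(1)$ variables. Accordingly, I would first invoke the mini-stack construction of Section~\ref{seclargek} to account for the fact that more than the single topmost element is read, citing the lemma there that the mini-stack changes neither the time nor the space bounds. One care point worth flagging explicitly is that the elements placed on the stack are not only vertices of $\Poly$ but also \emph{shadows} of reflex vertices; this is already absorbed into the proof of Lemma~\ref{lem_jsfully}, whose \lp\ is built on the \textsc{RayShooting} primitive and returns the correct neighbours in either case, so I would simply appeal to it.

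With the algorithm cast as a green stack algorithm, each of the three claimed trade-offs would be read off from a corresponding master theorem applied essentially verbatim. The $O(n^2\log n/2^s + n\log^2 n)$ bound using $O(s)$ variables is exactly the conclusion of Theorem~\ref{theo_timelogn}; the $O(n\log^{1+a} n)$ bound using $O(n^{1/((1+a)\log\log n)}\log\log n)$ variables is exactly Theorem~\ref{theo_hybrid}; and the $O(p\log_p n)$-space regime comes from the general construction of Section~\ref{sec_gen} underlying Theorem~\ref{theo_gen}. Since the same three bounds are shared by the convex-hull and triangulation theorems, the bookkeeping is uniform: it suffices to substitute $\alg$ equal to Joe and Simpson's method into the statements of those theorems.

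I expect the single nontrivial point to be the running time in the last regime, and this is where I would spend the proof's attention. The generic Theorem~\ref{theo_gen} only yields $O(n^{1+1/\log p})$, whereas the statement claims the sharper $O(n\log_p n)$, so the gap must be closed by exploiting greenness. My justification would be that a green reconstruction replaces the recursive re-execution of $\alg$ — the step that produces the recurrence $T(n)=2pT(n/p)+O(p)$ in Theorem~\ref{theo_gen} — by a (near-)linear, \emph{non-recursive} block reconstruction in the spirit of \textsc{GreenReconstruct} from Section~\ref{subsec_ologn}: to rebuild a block in partially-compressed form we need only locate its sub-block boundaries in a fixed direction (a single bridge-type query of Lemma~\ref{lem_bridge}), rather than the full index-straddling \lp\ search that carries the extra logarithmic factor. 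Collapsing the recurrence this way, the standard charging argument (each block is destroyed at most once, hence reconstructed $O(1)$ times, so the reconstruction cost at each level totals $O(n)$ across $\log_p n$ levels) would deliver the $O(n\log_p n)$ bound. Verifying that this linear-per-block reconstruction is legitimately available for the visibility \lp\ — and that it meshes with the $p$-ary partition and the cascading pop of Section~\ref{sec_gen} without reintroducing the logarithmic overhead — is the main obstacle I anticipate, and the part of the argument I would write out in full.
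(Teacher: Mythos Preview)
Your overall plan---invoke Lemma~\ref{lem_jsfully} to certify Joe and Simpson's algorithm as green, absorb $k>1$ via the mini-stack of Section~\ref{seclargek}, and then read off the three bounds from Theorems~\ref{theo_timelogn}, \ref{theo_hybrid}, and the construction of Section~\ref{sec_gen}---is exactly the paper's own (implicit) argument: the theorem is stated without proof, as a direct corollary of the framework once greenness is established. For the second and third trade-offs this is airtight and nothing more need be said.

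You are right to flag the first bound as the nontrivial point, and in fact you have put your finger on something the paper does not resolve. No general theorem in the paper yields $O(n\log_p n)$ time with $O(p\log_p n)$ space: Theorem~\ref{theo_gen} gives only $O(n^{1+1/\log p})$, and the green machinery of Section~\ref{sec_green} is developed only for $p=2$ (Theorem~\ref{theo_timelogn}) and the hybrid regime (Theorem~\ref{theo_hybrid}). The $O(n\log_p n)$ figure appears in Theorems~\ref{the_hull}, \ref{theo_triang}, and \ref{theo_visi} but is not backed by any stated result; it is plausibly a vestige of the preliminary version alluded to in the footnote in the introduction (``we claimed slightly faster running times\ldots the claimed bounds turned out to be incorrect''). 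So the ``gap'' you anticipate is real, and it lies in the paper, not in your reading of it.

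That said, your proposed fix does not close it as written. A \textsc{GreenReconstruct}-style reconstruction of a level-$i$ block into partially-compressed form requires locating the first and last stack elements in \emph{each} of its $p$ sub-blocks, i.e.\ $p-1$ boundary queries, not one. Even granting a linear-time ``bridge-type'' primitive (Lemma~\ref{lem_bridge} for hulls, \textsc{RayShooting} for visibility) in place of the $O(m\log m)$ \lp, this costs $O(pm)$ per block of size $m$; the level-wise charging then gives $O(pn)$ per level and $O(pn\log_p n)$ overall, not $O(n\log_p n)$. To recover the claimed bound one would need to perform all $p-1$ boundary queries on a block of size $m$ in $O(m)$ total time and $O(p)$ space---essentially a one-pass partial convex hull / visibility computation that records only the sub-block delimiters---which is plausible but is neither Lemma~\ref{lem_bridge} nor anything else the paper supplies. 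If you pursue this route, that is the lemma you would actually have to prove.
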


\ShoLong{\section{Conclusions} 
In this paper we have shown how to transform any stack algorithm so as to work in memory-constrained models, and presented several concrete applications where it can be applied. Moreover, for many applications the technique can be applied in a black box fashion without altering the specifics of the algorithm. In addition, since the technique is rather simple to implement, we believe it can be useful in practice.
A natural open problem is extending this approach to other data structures (e.g. trees), which 
would allow many other useful algorithms to work in memory-constrained workspaces.
}{\section{Conclusions}
In this paper we have shown how to transform any stack algorithm so as to work in memory-constrained models. The main benefit is the fact that the method can be used in a black box fashion without knowing the specifics of the algorithm. Surprisingly, the space-time trade-off is exponential for small workspaces (i.e., increasing the workspace by a constant will halve the running time of the algorithm), whereas the improvement in larger workspaces is smaller. Hence, it seems natural to use $\Theta(\log n)$ workspaces whenever possible.
 
We note that the problem is much simpler when we are allowed to rearrange the values of the input: it suffices to partition the input into three parts (stack, discarded values, and values not processed yet), and rearrange the input values as necessary. Indeed, this fact was already observed by Br\"onnimann and Chan~\cite{bc-seacchspllt-06} (for the problem of computing the convex hull of simple polygons) and De {\em et al.}~\cite{dmn-seavpsp-12} (for the visibility problem). However, this method does not fit in our constant workspace model. 

A natural open problem is to improve the running time of our approach, or find an equivalent of the \lp\ procedure with weaker requirements. Another interesting problem would be extending this approach to other data structures. Mainly, we only use the monotone, order, and invariant properties from stack algorithms. Thus, in principle our approach should extend to other data structures/algorithms in which the equivalent properties are satisfied: for example, we are confident that this approach can be extended to {\em deques} (a stack-like structure in which we can push and pop from either extreme). However, it would be more interesting if we could also compress trees or more complex data structures. Methods for compressing deques or trees would allow us to generalize the algorithms presented in Section~\ref{sec_applis} so as to compute the convex hull of a polygonal line (instead of a polygon) or triangulate a simple polygon (instead of a monotone one), respectively. 

}

\paragraph{Acknowledgments}
We thank the anonymous referee for their thorough research as well as for pointing us to~\cite{cc-mpga-07}. We also thank Mikkel Abrahamsen for useful discussions and comments. M.K was partially supported by the Secretary for Universities and Research of the Ministry of Economy and Knowledge of the Government of Catalonia and the European Union.
R.S. was partially supported by FP7 Marie Curie Actions Individual Fellowship PIEF-GA-2009-251235 and by FCT through grant SFRH/BPD/88455/2012.
M.K and R.S. were also supported by projects  MINECO MTM2012-30951 and Gen. Cat. DGR2009SGR1040 and by ESF EUROCORES program EuroGIGA-ComPoSe IP04-MICINN project EUI-EURC-2011-4306.

\bibliographystyle{abbrv}
\bibliography{visi}

\end{document}